\newcommand{\Tr}{\text{Tr}}							
\newcommand{\id}{\text{d}}\,						
\newcommand{\set}[1]{ \left\lbrace #1 \right\rbrace }	
\newcommand{\bra}[1]{\langle #1 \vert}
\newcommand{\ket}[1]{\vert #1 \rangle}
\newcommand{\mat}[4]{\left(\begin{array}{cc} #1 & #2 \\ #3 & #4 \end{array}\right)}%
\theoremstyle{definition}
\newtheorem{lem}{Lemma}
\newtheorem{thm}{Theorem}
\newtheorem{col}{Corollary}
\newtheorem{definition}{Definition}
\begin{document}
\title{Sanov and Central Limit Theorems for output statistics of quantum Markov chains}
\author{Merlijn \surname{van Horssen}}
\email[E-mail: ]{merlijn.vanhorssen@nottingham.ac.uk}
\affiliation{School of Physics and Astronomy, University of Nottingham, Nottingham, NG7 2RD, UK}
\author{M\u{a}d\u{a}lin \surname{Gu\c{t}\u{a}}}
\email[E-mail: ]{madalin.guta@nottingham.ac.uk}
\affiliation{School of Mathematical Sciences,\\ University of Nottingham, Nottingham, NG7 2RD, UK}
\date{\today}
\begin{abstract}
In this paper we consider the statistics of repeated measurements on the output of a quantum Markov chain. We establish a large deviations result analogous to Sanov's theorem for the empirical measure associated to finite sequences of consecutive outcomes of a classical stochastic process. Our result relies on the construction of an extended quantum transition operator (which keeps track of previous outcomes) in terms of which we compute moment generating functions, and whose spectral radius is related to the large deviations rate function. As a corollary to this we obtain a central limit theorem for the empirical measure. Such higher level statistics may be used to uncover critical behaviour such as dynamical phase transitions, which are not captured by lower level statistics such as the sample mean. As a step in this direction we give an example of a finite system whose level-one rate function is independent of a model parameter while the level-two rate is not.
\end{abstract}
\maketitle

\section{Introduction}

Quantum Markov processes are effective mathematical models for describing a wide class of open quantum dynamics where the environment interacts weakly with the system \cite{Gardiner2004,Breuer2002,Kummerer2002}. In the discrete time version, a quantum Markov chain consists of a system which interacts successively with identically prepared ancillas (input), via a fixed unitary transformation, cf. Figure \ref{fig.markov}. After the interaction, the ancillas (output) are in a finitely correlated state \cite{Fannes1992} which carries information about the dynamics. Such information can be extracted by performing successive measurements on the outgoing ancillas. Similar models are used in continuous time in the input-output formalism of quantum optics \cite{Gardiner2004}, where output measurements (e.g. photon counting or homodyne) are used to monitor and control the system \cite{Wiseman&Milburn}. Interest in such systems has grown with recent experimental progress in quantum optics and open quantum many-body systems \cite{Konya2012, Diehl2010a, Ates2012, Olmos2012}, in particular from the point of view of dynamical phase transitions \cite{Netocny2004a,Andrieux2010,Garrahan2011,Lee2012,Foss-Feig12,Kessler12,VanHorssen2012,Lesanovsky2013,Malossi2013,Carr2013} and system identification \cite{Guta2011,Guta2014}.

\begin{figure}[h]
\includegraphics[width=.75\textwidth]{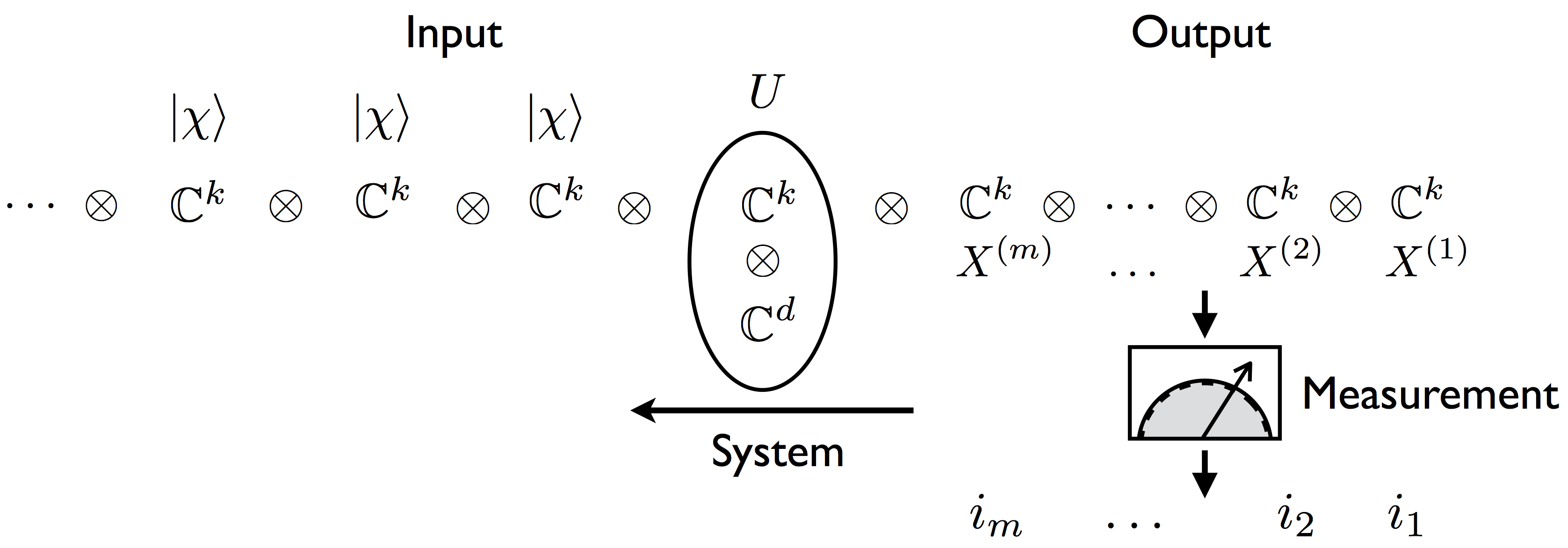}
\caption{A quantum Markov chain consists of a system $\mathbb{C}^d $ interacting successively with a sequence of identically prepared input ancillas with space $\mathbb{C}^k$ via the unitary operator $U$. Performing a measurement of a single-site operator $X$ on each of the output sites results in a sequence of measurement outcomes $i_{1}, i_{2}, \ldots, i_{m}$. }\label{fig.markov}
\end{figure}
Dynamical phase transitions in open quantum systems are visible through spectral properties of the generator of the dynamics \cite{Garrahan2011}, and through the thermodynamics of jump trajectories \cite{Benson1994,Pickles1996,Garrahan2009,Genway2012}; the large deviations approach to quantum phase transitions exploits both these features to uncover dynamical phases \cite{Lesanovsky2013,VanHorssen2014}. In this article we establish a large deviation principle (LDP) for \emph{counts of sequences of successive outcomes} in the output of a quantum Markov chain, which we refer to as the Sanov theorem for such systems. Sanov theorems have been considered in quantum systems before, both generally \cite{Hiai2007} and in the context of quantum hypothesis testing \cite{Bjelakovic2004, Hiai2008a, Bjelakovic2008, Jaksic2012}; our result extends previous work on LDPs for counting statistics, by considering counts of sequences of outcomes on an arbitrary number of subsequent sites.

The Sanov theorem for classical Markov chains \cite{DenHollander2000, Dembo2010} establishes an LDP for the empirical measure (the proportion of states visited by the chain) and the pair empirical measure (the frequency with which jumps between pairs of states occur) (see Fig. \ref{fig:sanovevents}); a result used for example in \cite{Ellis1995} to characterise phase transitions in the Curie-Weiss-Potts model. In non-equilibrium statistical mechanics, a dynamical phase transition may occur when the G\"{a}rtner-Ellis theorem for some statistic fails to hold \cite{Ellis1995, Touchette2009}, corresponding to a non-analyticity of the associated LD rate function.

To establish the LDP for the higher-level statistics in the Sanov theorem, we express the associated sequence of moment generating functions in terms of an extended transition operator, constructed from the original quantum Markov chain transition operator (analogous to the method in \cite{Hiai2007}). The LDP is then obtained via the G\"{a}rtner-Ellis theorem; the corresponding LD rate function is obtained in terms of the spectral radius of a perturbation of this new transition operator, where primitivity of the original transition operator ensures that this rate function is analytic.

In general, the existence of an LDP for a stochastic process does not imply that the process also satisfies a central limit theorem. However, in certain cases \cite{Letters1993} the proof of the LDP may be extended to produce a central limit theorem; as a corollary to our main result, we thus state a central limit theorem for the empirical measures.

The paper is organised as follows: in Sec. \ref{sec:bg} we introduce the background to our result. We briefly review the theory of large deviations in Sec. \ref{sec:bg-ld}, we introduce quantum Markov chains in Sec. \ref{sec:bg-qmc} and consider measurements on the output of a quantum Markov chain in Sec. \ref{sec:bg-output}. In Sec. \ref{sec:result} we state our main result, Thm. \ref{thm:sanov}, and Corollary \ref{cor:clt} (preceded by definitions and results directly related to our main result). To illustrate our result we discuss two examples in Sec. \ref{sec.examples}.

\section{Background}\label{sec:bg}
In this section we present a brief review of the basic concepts of large deviations needed in this paper, and introduce the set-up of quantum Markov chains. For good introductions to the theory and applications of large deviations we refer to the monographs \cite{Varadhan1984, Ellis1995, DenHollander2000, Dembo2010}.

\subsection{Large deviations}\label{sec:bg-ld}

Let $Y_1, Y_2,\dots $ be a sequence of independent and identically distributed (i.i.d.) $\mathbb{R}^d$-valued random variables defined on a probability space $(\Omega, \Sigma, \mathbb{P})$.
The law of large numbers (LLN) states that if $\mathbb{E}(Y_i)=y$ exists, then the average 
$$
X_n := \frac{1}{n} \sum_{i=1}^n Y_i
$$ 
converges to $y$ almost surely as $n\to \infty$. The Central Limit Theorem (CLT) characterises the speed of convergence, and shows that the fluctuations around the mean decrease as 
$n^{-1/2}$, and are asymptotically normally distributed: 
$$
\sqrt{n}(X_n-y) \overset{\mathcal{L}}{\longrightarrow} N(0,v).   
$$
Here $\mathcal{L}$ denotes the convergence in law (distribution) and $N(0,v)$ is the normal distribution with mean zero and variance $v:= \mathbb{E}(Y_i^2) -\mathbb{E}(Y_i)^2$. As the distribution of $X_n$ concentrates around $y$, one would like to know how the probability of staying away from $y$ decreases with $n$. Such large deviations have exponentially small probabilities i.e.
\begin{equation}\label{eq.ld.iid}
\mathbb{P}(X_n  \geq y + a ) \sim \exp(-n I(a) ) 
\end{equation}
in a sense which will be made precise below. 

\begin{figure}
    \subfloat[]
    {\includegraphics[width=.25\textwidth]{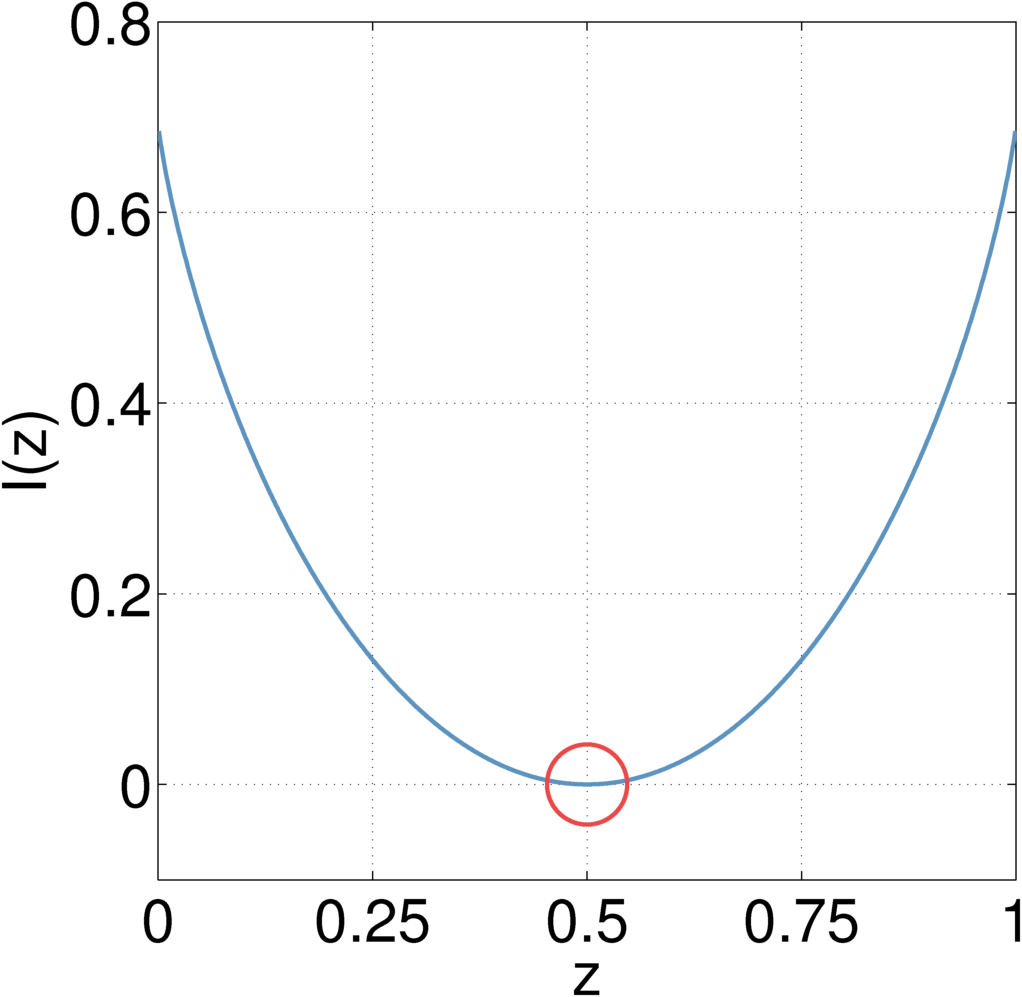}}\quad
    \subfloat[]   
    {\includegraphics[width=.26\textwidth]{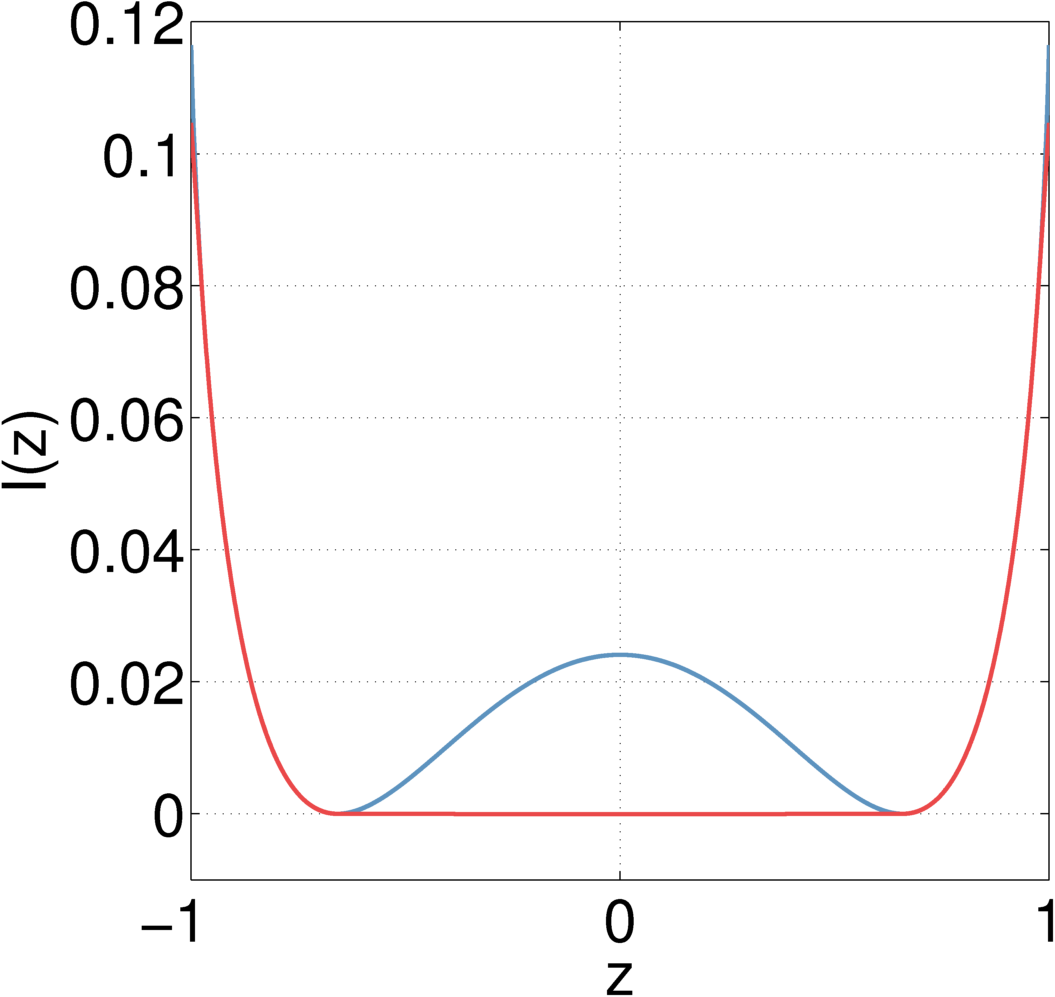}}\quad
        \subfloat[]
    {\includegraphics[width=.395\textwidth]{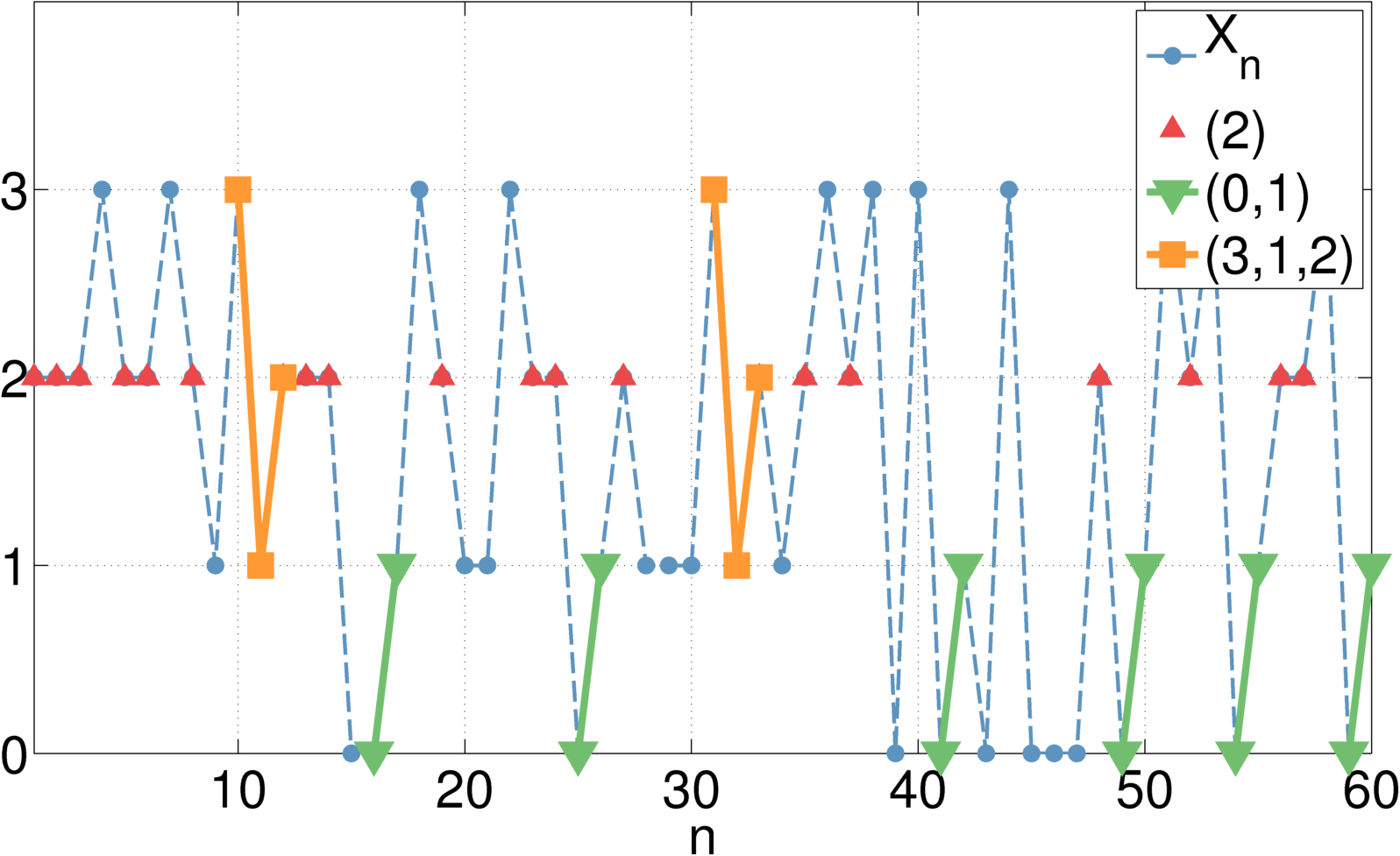}\label{fig:sanovevents}}     
    \caption{Typical large deviations rate functions. \textbf{(a):} rate function $I(z)$ on the interval $[0,1]$ associated to the sample mean of a fair coin toss, with minimum of $I$ indicated at $z = 1/2$; \textbf{(b):} example of a non-convex rate function (blue) and its convex envelope (red) obtained from the G\"{a}rtner-Ellis theorem. \textbf{(c):} Sample trajectory $X_{n}$ with events associated to different levels, e.g. visits to state $2$, jumps from $0$ to $1$, etc., which are used to compute the empirical measures.}
    \label{fig:ldrate}
\end{figure}

In another example, suppose $Y_1, Y_2, \dots$ are i.i.d. random variables with values in $\{1,\dots, d\}$ and common probability distribution $p_i = \mathbb{P}(Y= i)$. The \emph{empirical measure}  $\hat{\mathbb{P}}_n$ is defined as the random probability distribution given by the frequencies with which the different values in $\{1,\dots ,d\}$ occur (where the indicator function is defined by $1_{i}(Y) = 1$ if $Y=i$, and $0$ otherwise)
\begin{equation}\label{eq.ld.sanov}
\hat{\mathbb{P}}_n (i)= \frac{1}{n} \sum_{j=1}^n 1_{i}(Y_j).
\end{equation}
Again, by the LLN we have $\hat{\mathbb{P}}_n (i) \to p_i$, and the fluctuations around the mean can be described by the CLT. The large deviations are characterised by the following result known as the Sanov theorem. If $R$ is a measurable subset of the simplex of probability distributions, and $p \notin R$, then
$$
\mathbb{P} (\hat{\mathbb{P}}_n \in R) \sim \exp \left( -n  \inf_{q\in R} I (q|p) \right)
$$
where the rate function $I(q|p)$ is the relative entropy 
$$
I(q|p)= \sum_{i=1}^d p_i \log (p_i/q_i).
$$
We summarise \eqref{eq.ld.sanov} by saying that "a large deviation event will happen in the least unlikely of all the unlikely ways" \cite{DenHollander2000}.

We can now describe a more general set-up of large deviations theory \cite{Varadhan1984, Ellis1995, DenHollander2000, Dembo2010}, and formulate one of the key mathematical tools used later in our paper. A sequence $\set{\mu_{n}:n \in \mathbb{N}}$ probability distributions on $\mathbb{R}^d$ is said to satisfy a \emph{large deviation principle} (LDP) if there exists a lower semicontinuous function $I : \mathbb{R}^{d} \rightarrow [0,\infty]$, called a \emph{rate function}, such that for all measurable subsets $B \subset \mathbb{R}^{d}$
\begin{eqnarray}
		-\displaystyle \inf_{x \in B^{0}} I(x) &\leq & \displaystyle \liminf_{n \rightarrow \infty} \tfrac{1}{n} \log \mu_{n}(B)\nonumber\\ 
		&\leq & \displaystyle \limsup_{n \rightarrow \infty} \tfrac{1}{n} \log \mu_{n}(B) \leq -\displaystyle \inf_{x \in \bar{B}} I(x).\nonumber
\end{eqnarray}
Here $B^{0}$ and $\bar{B}$ denote the interior and closure of $B$, respectively; if these sets coincide, the LDP may be expressed in the intuitive form used above $\mu_{n}(B) \sim \exp \left( -n \inf_{x \in B} I(x) \right)$.

In our results we will employ a theorem due to G\"{a}rtner~\cite{Gartner1977} and Ellis~\cite{Ellis1984}, which provides a necessary condition for obtaining an LDP; we state it here with stronger assumptions, equivalent to the result in \cite{Gartner1977}.
\begin{thm}\label{th.ge}
Let  $\set{\Gamma_{n} : n \in \mathbb{N}}$ be the moment generating function associated to $\mu_n$
	\[	\Gamma_{n}(t) = \int_{\mathbb{R}^{d}} e^{n \langle t, x \rangle} \id \mu_{n}(x),\quad t \in \mathbb{R}^{d},	\]
where $\langle \cdot,  \cdot \rangle$ denotes the standard Euclidean inner product. Suppose the limit
\begin{equation}\label{eq:gelimit}
	F(t) = \lim_{n \rightarrow \infty} \tfrac{1}{n} \log \Gamma_{n}(t)
\end{equation}
is finite for all $t \in \mathbb{R}^{d}$ and $F : \mathbb{R}^{d} \rightarrow \mathbb{R}$ is a differentiable function. Then $\set{\mu_{n} : n \in \mathbb{N}}$ satisfies an LDP with rate function $I$ given by the Legendre-Fenchel transform of $F$,
\begin{equation}\label{eq:gerate}
	I(x) = \sup_{t \in \mathbb{R}^{d}} \set{ \langle t, x \rangle - F(t)}.
\end{equation}
\end{thm}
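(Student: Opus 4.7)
My plan is to follow the standard two-halves proof strategy, treating the upper bound on closed sets and the lower bound on open sets separately, and then combining them to recover the full LDP.

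For the \textbf{upper bound}, I would start from the exponential Chebyshev inequality: for every $t \in \mathbb{R}^{d}$ and every half-space $H_{t,c} = \{x : \langle t, x\rangle \geq c\}$,
\[
\mu_{n}(H_{t,c}) \leq e^{-nc}\, \Gamma_{n}(t),
\]
so $\limsup_{n}\tfrac{1}{n}\log \mu_{n}(H_{t,c}) \leq -c + F(t)$. Optimising over $t$ and $c$ for each point $x$ gives an upper bound of $-I(x)$ on the exponential rate for a small neighbourhood of $x$. For a compact set $K \subset \bar{B}$, a finite cover by such neighbourhoods then yields
\[
\limsup_{n\to\infty}\tfrac{1}{n}\log \mu_{n}(K) \leq -\inf_{x \in K} I(x).
\]
To pass from compact sets to general closed sets I would establish exponential tightness, i.e. the existence of compacts $K_{R}$ with $\limsup_{n}\tfrac{1}{n}\log \mu_{n}(K_{R}^{c}) \leq -R$. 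This is obtained from finiteness of $F$ at a few well-chosen points $\pm r e_{j}$ along the coordinate axes, which bounds the probability of the tails in each direction and lets one take $R \to \infty$.

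For the \textbf{lower bound}, the key tool is an exponential change of measure. Fix $y \in B^{0}$; by convex duality, since $F$ is finite and differentiable (hence $I$ is the Legendre transform of a differentiable convex function), for every $y$ in the effective domain there exists $t_{y}$ with $y = \nabla F(t_{y})$ and $I(y) = \langle t_{y}, y\rangle - F(t_{y})$. Define the tilted measures
\[
\id \tilde{\mu}_{n}(x) = \frac{e^{n\langle t_{y}, x\rangle}}{\Gamma_{n}(t_{y})}\, \id \mu_{n}(x).
\]
A direct computation shows that $\tilde{\mu}_{n}$ has moment generating logarithm $\tilde{F}(s) = F(s + t_{y}) - F(t_{y})$, whose gradient at $s = 0$ equals $y$; this means that under the tilted measures the ``typical'' value is $y$, and a weak law (obtained from a second-moment or Chebyshev estimate for $\tilde{F}$ near the origin) gives $\tilde{\mu}_{n}(U) \to 1$ for any neighbourhood $U$ of $y$ contained in $B^{0}$. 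Undoing the tilt for $x$ in a small ball $U \subset B^{0}$ around $y$ yields
\[
\mu_{n}(U) \geq e^{-n\langle t_{y}, y\rangle - n\varepsilon}\, \Gamma_{n}(t_{y})\, \tilde{\mu}_{n}(U),
\]
so $\liminf_{n}\tfrac{1}{n}\log \mu_{n}(B^{0}) \geq -\langle t_{y}, y\rangle + F(t_{y}) - \varepsilon = -I(y) - \varepsilon$. Taking the supremum over $y \in B^{0}$ and then $\varepsilon \to 0$ gives the desired bound.

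I expect the \textbf{main obstacle} to be the lower bound, and specifically the convex-analytic step that guarantees the existence of the tilt $t_{y}$ attaining the supremum in \eqref{eq:gerate} for every $y$ in the domain of interest. This is exactly where the hypothesis that $F$ is differentiable everywhere (the essentially smooth / steep condition) is used; without it, one would only obtain the bound on a proper subset of $\mathbb{R}^{d}$ and have to argue separately about boundary points. The remaining technical work — verifying the weak law under the tilt and the exponential tightness for the upper bound — is standard once finiteness and differentiability of $F$ are in hand. Finally, lower semicontinuity of $I$ is automatic from its definition as a supremum of continuous affine functions, so the LDP as stated follows by combining the two bounds.
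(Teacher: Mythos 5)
The paper itself offers no proof of this statement: it is the classical G\"artner--Ellis theorem, quoted with citations to G\"artner and Ellis and stated under simplified hypotheses ($F$ finite everywhere and differentiable), so there is no argument in the paper to compare yours against line by line. Your proposal reconstructs the standard textbook proof (as in Dembo--Zeitouni): exponential Chebyshev over half-spaces plus a finite covering for the upper bound on compacts, exponential tightness from finiteness of $F$ along the coordinate directions to pass to closed sets, and an exponential change of measure for the lower bound on open sets. In outline this is sound, and you correctly locate the role of differentiability in the lower bound.

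Two steps, however, need repair before the sketch is a proof. First, the weak law under the tilted measures cannot come from a second-moment estimate: the only hypothesis is pointwise convergence of the convex functions $\tfrac{1}{n}\log\Gamma_{n}$, which (by convexity) controls first derivatives at points where the limit is differentiable, but gives no control whatsoever on $\nabla^{2}\log\Gamma_{n}$, i.e.\ on the covariance of $\tilde{\mu}_{n}$. The workable version is the \emph{exponential} Chebyshev bound applied to $\tilde{\mu}_{n}$ with a small parameter: cover the complement of a cube around $y$ by the $2d$ half-spaces $\{x : \pm(x_{j}-y_{j})\geq\delta\}$ and use $\tilde{F}(\varepsilon u)-\varepsilon\langle u,y\rangle = o(\varepsilon)$, so each half-space has tilted probability decaying like $e^{-n(\varepsilon\delta + o(\varepsilon))}$; this identity $\nabla\tilde{F}(0)=\nabla F(t_{y})=y$ is precisely where differentiability of $F$ at $t_{y}$ does the work (it forces $y$ to be the unique concentration point of the tilted family). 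Second, your claim that every $y$ in the effective domain of $I$ is of the form $\nabla F(t_{y})$ is false at boundary points of $\mathrm{dom}\,I$: for $d=1$ and $F(t)=\log\cosh t$ one has $\mathrm{dom}\,I=[-1,1]$ while $F'(\mathbb{R})=(-1,1)$, so $y=\pm 1$ (where $I=\log 2$ is finite) admits no tilt. Fenchel duality ($\partial I(y)\neq\emptyset$, hence $y\in\partial F(t_{y})=\{\nabla F(t_{y})\}$) supplies $t_{y}$ only for $y$ in the relative interior of $\mathrm{dom}\,I$. This is harmless for the lower bound, because convexity of $I$ gives $\inf_{B^{0}} I = \inf_{B^{0}\cap\,\mathrm{ri}(\mathrm{dom}\,I)} I$ (slide any $y\in B^{0}\cap\mathrm{dom}\,I$ slightly towards a relative interior point along a segment and use convexity of $I$), but this reduction must be stated, since otherwise the lower bound is proved on a smaller class of points than claimed. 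With these two repairs your argument is complete.
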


%

The Gartner-Ellis theorem implies that the rate function $I(\cdot)$ in \eqref{eq.ld.iid} is given by the Legendre transform of 
$\Gamma(t) = \mathbb{E}(e^{tY})$, and can also be employed for proving the Sanov theorem, where $\mu_n$ is the distribution of the empirical measure $\hat{\mathbb{P}}_n$ seen as a vector in $\mathbb{R}^d$. 

The Sanov Theorem can be extended \cite{Sanov1957,Miller1961,Donsker1976} to empirical measures associated to an \emph{irreducible}\footnote{A Markov chain with transition matrix $\Pi$ is called irreducible if for all $1 \leq i,j \leq d$ there is an integer $n$ such that $\Pi^{n}(i,j) > 0$} Markov chain $\set{X_{n} : n \in \mathbb{N}}$ over a discrete state space $\{1,\dots ,d\}$ with transition matrix $\Pi$. For instance, the empirical measure $\hat{\mathbb{P}}^{(1)}(i) := \sum_{j=1}^n 1_{i}(X_j)$ keeps track of the empirical frequencies associated to each state which by ergodicity converge the stationary distribution of the chain. This empirical measure satisfies an LDP on $\mathbb{R}^{d}$, where the rate function is given by the Legendre transform of the function $\lambda \mapsto \log r(\Pi_{\lambda})$. Here $r(\cdot)$ denotes the \emph{spectral radius}; the matrix $\Pi_{\lambda}$ is a certain analytic perturbation of $\Pi$. Similarly, the pair-empirical measure
$$
\hat{\mathbb{P}}^{(2)}(k,l) := \sum_{j=1}^{n-1} 1_{k,l}(X_j, X_{j+1}),
$$
which encodes additional information about how the chain jumps from one state to another, also satisfies an LDP. These empirical measures of different orders are illustrated in  Fig.\ref{fig:sanovevents} on a sample trajectory of a Markov chain with four states. We note that this approach used to establish LDPs for the empirical measures associated to Markov chains bears similarities to the proof of the main result of this paper which deals with the output process of a quantum Markov chains.

\subsection{Quantum Markov chains}\label{sec:bg-qmc}

A quantum Markov chain \cite{Kummerer2002} consists of of a system, or `memory' with Hilbert space $\mathbb{C}^d$ which interacts successively (moving from right to left) with a chain of identically prepared ancillas, or `noise units' $\mathbb{C}^k$, via a unitary 
$U:\mathbb{C}^d\otimes\mathbb{C}^k\to \mathbb{C}^d\otimes\mathbb{C}^k$, cf. Figure \ref{fig.markov}. Physically, this may be seen a discrete-time model for the evolution of an open quantum system coupled to the environment in the Markov approximation, as shown in \cite{Buchleitner2002,Attal2006}. 

We assume that the initial state of the noise units (or input) is a fixed vector $|\chi\rangle\in \mathbb{C}^d$, such that the output state is  determined by the isometry
\begin{eqnarray}
V&:&\mathbb{C}^d\to \mathbb{C}^d\otimes \mathbb{C}^k\nonumber\\
V&:& \ket{\psi}  \mapsto U( \ket{\psi} \otimes |\chi\rangle ) = \sum_{i=1}^{k} V_{i} \ket{\psi} \otimes \ket{i}\label{eq:unitarykraus},
\end{eqnarray}
where $\{|1\rangle, \dots ,|k\rangle \}$ are the vectors of an orthonormal basis in $\mathbb{C}^k$.
The operators $V_i = \langle i|U|\chi\rangle$ will be called the \emph{Kraus operators} associated to the quantum Markov chain, and satisfy the normalisation condition $\sum_{i=1}^{k} V_{i}^{\ast} V_{i} = \mathbf{1}$. If the system starts in the state $|\psi\rangle$, we can apply  \eqref{eq:unitarykraus} successively to express the \emph{joint} state of $n$ output units and the system as a matrix product state \cite{Schon2005,Schoen2006}
\begin{equation} \label{eq:sanovstate}
	\vert \psi^{(n)} \rangle = \sum_{i_{1}, \ldots, i_{n} = 1}^{k} V_{i_{n}} \cdots V_{i_{1}} \vert \psi \rangle \vert i_{n}, \ldots, i_{1} \rangle
\end{equation}
reflecting the inherent Markovian character of the dynamics. By tracing over the noise units we find that the reduced system dynamics is given by the semigroup 
$$
T^n_*: M_d \to M_d , \qquad n\in \mathbb{N},
$$ where 
\begin{equation*}
    T_{\ast}:\rho \mapsto \sum_{i=1}^{k} V_{i} \rho V_{i}^{\ast} =  \Tr_{\mathbb{C}^{k}}\left(U\, \rho \otimes |\chi\rangle\langle \chi |\, U^{\ast} \right)
\end{equation*}
is a trace preserving completely positive map describing the system's transition operator. For later purposes we note that 
the corresponding map  in the Heisenberg representation is given by
\begin{eqnarray*}
T&:& M_d \to M_d\\
T&:& A \mapsto \sum_{i=1}^{k} V_{i}^{\ast}A V_{i}.
\end{eqnarray*}
Now, suppose that after the interaction we perform a projective measurement on each of the output noise units, with respect to the basis $\{|1\rangle ,\dots, |k\rangle\}$. If $X^{(i)}$ denotes the outcome of the measurement of the $i$th unit, then the joint probability distribution of the measurement process is 
$$
p(i_1,\dots, i_n) = \mathbb{P}( X^{(1)} = i_1, \dots, X^{(n)}= i_n ) = \| V_{i_{n}} \cdots V_{i_{1}} \ket{\psi}\|^2. 
$$ 

This process is not necessarily stationary but becomes so in the large $n$ limit if $T$ satisfies a certain ergodic property  discussed in Sec. \ref{sec:result} below.

\subsection{Empirical measures associated to the measurement process}\label{sec:bg-output}

Our main goal is to establish a large deviation principle for the empirical measure associated to chains of subsequent outcomes occurring in the measurement trajectory. For each $m\in \mathbb{N}$ we define the empirical measure 
$\hat{\mathbb{P}}^{(m)}_n $ over $\{1,\dots, k\}^m$ by
\begin{equation}\label{eq.empirical.measure}
\hat{\mathbb{P}}^{(m)}_{n}(i_1,\dots, i_m) := \frac{1}{n-m+1}\sum_{j=1}^{n-m+1} 1_{(i_1,\dots, i_m)}(X^{(j)}, \dots , X^{(j+m-1)}).
\end{equation}
In the case $m=1$ this keeps track of the frequencies of the different outcomes in a measurement trajectory, and the associated LDP has been established in \cite{Hiai2007}. A similar result holds in continuous time, for total counts statistics when a counting measurement is performed on the output, and this plays a key role in the theory of \emph{dynamical phase transitions} \cite{Garrahan2009,Garrahan2011}. For $m=2$ the empirical measure captures the statistics of pairs of subsequent outcomes. As we will show in Sec. \ref{sec.examples}, the higher level $(m > 1)$ LD rates capture more information about the measurement process than the total counts, and therefore could be the basis of a more in-depth understanding of the theory of dynamical phase transitions.

The main tool in establishing the LDP will be the G\"{a}rtner-Ellis theorem \ref{th.ge}. For this we need to compute the moment generating function 
\begin{equation}\label{eq.mgf}
 \Gamma_{n}^{(m)}(t):= \mathbb{E} \left( \exp( n \langle \hat{\mathbb{P}}^{(m)}_{n} , t\rangle  )\right)
\end{equation}
where $t\in \mathbb{R}^{k^m}$ and 
$$
\langle \hat{\mathbb{P}}^{(m)}_{n} , t\rangle = \sum_{i_1,\dots ,i_m} t_{i_1,\dots, i_m }\hat{\mathbb{P}}^{(m)}_{n}(i_1,\dots, i_m).
$$
We will show that the MGF can be expressed in terms of a certain extended transition operator which acts on the system but also takes into account the results of $m-1$ measurement outcomes. For this we consider the space  
$$
\mathcal{D}^{m,k}_{d}:=M_{d} \otimes \left( \mathbb{C}^{k} \right)^{\otimes (m-1)},
$$ 
seen as a block diagonal algebra with $k(m-1)$ blocks isomorphic to $M_{d}$.  We will represent an element $Y\in \mathcal{D}^{m,k}_{d}$ by the diagonal elements 
$\left[ Y \right]_{i_{1}, \ldots, i_{m-1}}\in M_d$, where $i_{1},\dots , i_{m-1}\in \{1,\dots ,k\}$.

\begin{lem}\label{lemma.mgf.semigroup}
Let 
$
T_{t,m} :  \mathcal{D}^{m,k}_{d}  \rightarrow \mathcal{D}^{m,k}_{d}
$
be the completely positive map defined as 
\begin{equation}
\left[ T_{t,m}(Y) \right]_{i_{1},\ldots,i_{m-1}} = \sum_{i_{m}=1}^{k} e^{t_{i_{1},\ldots,i_{m}}}
V_{i_{1}}^{\ast} \left[ Y \right]_{i_{2},\ldots,i_{m}} V_{i_{1}}
\label{eq:sanovmaps}
\end{equation}
Then the moment generating function \eqref{eq.mgf}  can be expressed as
\[
    \Gamma_{n}^{(m)}(t) = \langle \psi \vert \sum_{i_{1}, \ldots, i_{m-1} =1}^{k} \left[ T_{t,m}^{n-m+1}\left(M^{(m)}\right)\right]_{i_{1}, \ldots, i_{m-1}} \vert \psi \rangle
\]
where
\begin{equation*}
    \left[ M^{(m)} \right]_{i_{1}, \ldots, i_{m-1}} = V_{i_{1}}^{\ast} \cdots V_{i_{m-1}}^{\ast} V_{i_{m-1}} \cdots V_{i_{1}}.
\end{equation*}
In particular, $M^{(m)}$ is an eigenvector of $T_{0,m} $ with eigenvalue $1$.

\end{lem}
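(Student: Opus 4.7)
My approach is to compute the moment generating function directly from the joint distribution of measurement outcomes and then recognise the resulting sum as the iterated action of $T_{t,m}$ on $M^{(m)}$.

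First, I substitute the joint outcome probability $p(i_1,\ldots,i_n) = \langle \psi | V_{i_1}^{\ast}\cdots V_{i_n}^{\ast} V_{i_n}\cdots V_{i_1} | \psi \rangle$ from Sec.~\ref{sec:bg-qmc} into \eqref{eq.mgf}. Using
\[
(n-m+1)\,\langle \hat{\mathbb{P}}_n^{(m)}, t \rangle = \sum_{j=1}^{n-m+1} t_{X^{(j)},\ldots,X^{(j+m-1)}},
\]
the exponential factors over trajectories as a product $\prod_{j=1}^{n-m+1} e^{t_{i_j,\ldots,i_{j+m-1}}}$, turning the MGF into a single sum over $(i_1,\ldots,i_n)$ of these weights multiplied by $\langle \psi | V_{i_1}^{\ast}\cdots V_{i_n}^{\ast} V_{i_n}\cdots V_{i_1} | \psi \rangle$.

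The central step is then a short induction on $k \geq 1$ establishing
\[
[T_{t,m}^{\,k}(Y)]_{i_1,\ldots,i_{m-1}} = \sum_{i_m,\ldots,i_{m+k-1}} \Bigl(\prod_{j=1}^{k} e^{t_{i_j,\ldots,i_{j+m-1}}}\Bigr) V_{i_1}^{\ast}\cdots V_{i_k}^{\ast}\, [Y]_{i_{k+1},\ldots,i_{k+m-1}} \, V_{i_k}\cdots V_{i_1}.
\]
The base case $k=1$ is just \eqref{eq:sanovmaps}. For the inductive step I apply \eqref{eq:sanovmaps} to $T_{t,m}^{k-1}(Y)$ evaluated at block index $(i_2,\ldots,i_m)$; the outer factor $V_{i_1}^{\ast}\cdot V_{i_1}$ and the weight $e^{t_{i_1,\ldots,i_m}}$ attach cleanly on the left, and a relabelling $j \mapsto j+1$ in the remaining product aligns the exponential weights into the required range $j=1,\ldots,k$.

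Applying the identity with $k = n-m+1$ and $Y = M^{(m)}$, the boundary factor
\[
[M^{(m)}]_{i_{n-m+2},\ldots,i_n} = V_{i_{n-m+2}}^{\ast}\cdots V_{i_n}^{\ast} V_{i_n}\cdots V_{i_{n-m+2}}
\]
slots between the accumulated strings $V_{i_1}^{\ast}\cdots V_{i_{n-m+1}}^{\ast}$ and $V_{i_{n-m+1}}\cdots V_{i_1}$ to form the full operator $V_{i_1}^{\ast}\cdots V_{i_n}^{\ast} V_{i_n}\cdots V_{i_1}$ appearing in $p(i_1,\ldots,i_n)$. Summing over $i_1,\ldots,i_{m-1}$ and taking $\langle \psi |\cdot|\psi\rangle$ therefore reproduces the MGF expansion of the first paragraph. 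The eigenvalue claim is a direct specialisation at $t=0$: the Kraus completeness $\sum_{i_m} V_{i_m}^{\ast} V_{i_m}=\mathbf{1}$ telescopes the innermost sum in $[T_{0,m}(M^{(m)})]_{i_1,\ldots,i_{m-1}}$ down to $[M^{(m)}]_{i_1,\ldots,i_{m-1}}$. The only non-routine aspect of the argument is keeping the shifting index ranges correctly aligned through the induction; no conceptual obstacle arises beyond careful bookkeeping.
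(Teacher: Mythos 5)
Your proposal is correct and follows essentially the same route as the paper: expanding the moment generating function as a sum over outcome trajectories weighted by $\langle \psi \vert V_{i_1}^{\ast}\cdots V_{i_n}^{\ast}V_{i_n}\cdots V_{i_1}\vert \psi\rangle$ and the exponential factors, then identifying this sum with $T_{t,m}^{n-m+1}$ applied to $M^{(m)}$, and finally using Kraus normalisation for the eigenvalue claim. The only difference is that the paper compresses the identification step into ``a short computation shows,'' whereas you supply that computation explicitly as an induction on the power $k$, which is a faithful (and verifiably correct) filling-in of the omitted bookkeeping.
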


\emph{Proof.} We denote by $X^{(l,m)}$ the $\mathbb{R}^{k^m}$-valued random variable which represents the outcomes on $m$ subsequent sites: for a sequence of outcomes $\mathbf{i} = (i_{1}, \ldots, i_{m})$ we have
\[ 
    \left[ X^{(l,m)} = \mathbf{i}\right] \Leftrightarrow \left[ X^{(l)} = i_{1}, \ldots, X^{(l+m-1)} = i_{m}\right]
\]
where $X^{(j)}$ is the random variable associated to measurement outcomes of the observable $X$ on site $j$. The moment generating function is then
\begin{align*}
    \Gamma_{n}^{(m)}(t) &= \mathbb{E} \left[ \exp  \left\langle t, \sum_{j=1}^{n-m+1} X^{(j,m)}   \right\rangle \right].
\end{align*}
Writing $P^{(l,m)}_{\mathbf{i}}:= P^{(l)}_{i_{1}} \otimes \cdots \otimes P^{(l+m-1)}_{i_{m}}$ for the one dimensional 
projection on $m$ subsequent sites associated to a sequence of outcomes $X^{(l,m)} = \mathbf{i}$, we have
\begin{equation*}
	\Gamma^{(m)}_{n}(t) =  \left\langle  \exp \left( \sum_{l=1}^{n-m+1} \sum_{\mathbf{i}} t_{\mathbf{i}} P^{(l,m)}_{\mathbf{i}}\right) \right\rangle,
\end{equation*}
where the expectation is taken with respect to the system-output state $|\psi^{(n)}\rangle$ defined in \eqref{eq:sanovstate}.
This gives 
\begin{equation} 
    \Gamma^{(m)}_{n}(t) =  \sum_{i_{1}, \ldots, i_{n}=1}^{k} \langle \psi \vert V_{i_{1}}^{\ast} \cdots V_{i_{n}}^{\ast} V_{i_{n}} \cdots V_{i_{1}}  \vert \psi \rangle
    \cdot \exp \left( \sum_{l=1}^{n-m+1} t_{i_{l}, \ldots, i_{l+m-1}} \right).\label{eq:sanovlevelmmgf}
\end{equation}
Finally, a short computation shows that $\Gamma_{n}^{(m)}$ can be expressed as
\[
    \Gamma_{n}^{(m)}(t) = \langle \psi \vert \sum_{i_{1}, \ldots, i_{m-1} =1}^{k} \left[ T_{t,m}^{n-m+1}\left(M^{(m)}\right)\right]_{i_{1}, \ldots, i_{m-1}} \vert \psi \rangle
\]
where
\begin{equation*}
    \left[ M^{(m)} \right]_{i_{1}, \ldots, i_{m-1}} = V_{i_{1}}^{\ast} \cdots V_{i_{m-1}}^{\ast} V_{i_{m-1}} \cdots V_{i_{1}}.
\end{equation*}

The eigevalue property
$$
T_{0,m} (M^{(m)})=  M^{(m)}
$$
follows directly from the definition of $T_{t,m} $ and the normalisation $\sum_i V_i^* V_i = \mathbf{1}$.

\qed

\section{Main result}\label{sec:result}

In this section we recall notions of irreducibility and primitivity  (Def. \ref{def:irred}), and state existing results (Thm. \ref{thm:qpf} and Lem. \ref{thm:sanovlem}) concerning irreducible maps. We then state and prove our main result, the Sanov theorem for the empirical measure \eqref{eq.empirical.measure} in Thm. \ref{thm:sanov}. We note that the following results are all in the context of positive maps on finite-dimensional $C^{\ast}$-algebras. In our theorem, we are considering positive linear maps on algebras with a block form $\bigoplus_{j} M_{d_{j}}$, where each $M_{d_{j}}$ is an algebra of $d_{j} \times d_{j}$ matrices with complex entries; algebras of this form are a particular class of $C^{\ast}$-algebras \cite{Dixmier1981}.

\begin{definition}[\cite{Evans1978}]\label{def:irred}
Let $\mathcal{A}$ be a finite-dimensional $C^{\ast}$-algebra and let $R$ be a positive linear map on $\mathcal{A}$. Then is $R$ called 
\begin{itemize}
\item[(i)]
\emph{irreducible} if there exists $n\in \mathbb N$ such that $({\rm Id}+R)^n$ is strictly positive, 
i.e.  
$
({\rm Id}+R)^n(A) >0 
$
for all positive operators $A$, and 

\item[(ii)] \emph{primitive} if there exists an $n\in \mathbb N$, such that
$
R^n 
$
is strictly positive. 
\end{itemize}
\end{definition}
Primitivity is a stronger requirement than irreducibility. 
The following theorem (see \cite{Evans1978,Sanz2010,Fagnola2010}) collects the essential properties of irreducible (primitive) quantum transition operators needed in this paper.

\begin{thm}[{\bf quantum Perron-Frobenius} \cite{Evans1978,Sanz2010,Fagnola2010}]\label{thm:qpf}
Let $\mathcal{A}$ be a finite-dimensional $C^{\ast}$-algebra and let $R$ be a positive linear map on $\mathcal{A}$. Denote by $r(R) := \max_i |\lambda_i| $ the \emph{spectral radius} of $R$, where $\{ \lambda_1, \dots ,\lambda_{d^2} \}$  are the (complex) eigenvalues of $R$ arranged in decreasing order of magnitude. Then 
\begin{itemize}
\item[(i)]
$r(R)$ is an eigenvalue of $R$, and it has a positive eigenvector.

\vspace{1mm}

\item[(ii)]
If additionally, $R$ is unit preserving then $r(R)=1$ with eigenvector $\mathbf{1}$.

\vspace{1mm}

\item[(iii)]
If additionally, $R$ is irreducible then $r(R)$ is a nondegenerate eigenvalue for $R$ and $R_*$, and both corresponding eigenvectors are strictly positive.

\vspace{1mm}
 
\item[(iv)] If additionally, $R$ is primitive  then $|\lambda_i| <r(R)$ for all eigenvalues other than $r(R)$.
\end{itemize}

\end{thm}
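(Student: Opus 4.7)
The plan is to establish the quantum Perron-Frobenius theorem by adapting the classical Perron-Frobenius strategy to the positive cone $\mathcal{A}_+$ of the finite-dimensional $C^\ast$-algebra $\mathcal{A}$. This cone is closed, convex, and has non-empty interior (the strictly positive elements) inside the real subspace of self-adjoint elements, and $R$ preserves it; parts (i) and (ii) will then follow from the finite-dimensional Krein-Rutman theorem, while (iii) and (iv) exploit the stronger irreducibility and primitivity hypotheses.

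For (i), I would apply Krein-Rutman to the cone-preserving $R$: the spectral radius $r(R)$ is an eigenvalue with at least one eigenvector in $\mathcal{A}_+$. A concrete construction takes a subsequential limit of the Cesàro averages $N^{-1}\sum_{n=0}^{N-1} r(R)^{-n} R^n(A)$ for a strictly positive seed $A$, using compactness of the intersection of $\mathcal{A}_+$ with the unit ball. For (ii), $R(\mathbf{1})=\mathbf{1}$ gives $1$ as an eigenvalue, so $r(R)\geq 1$; conversely, a positive unital map on a $C^\ast$-algebra is a norm contraction (by Russo-Dye, or directly from the Kadison-Schwarz inequality), so $r(R)\leq \|R\|\leq 1$, forcing $r(R)=1$ with $\mathbf{1}$ as eigenvector.

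For (iii), let $A$ be the Perron eigenvector from (i) and observe the identity $(\mathrm{Id}+R)^n A = (1+r(R))^n A$. Irreducibility gives $(\mathrm{Id}+R)^n A$ strictly positive for some $n$, whence $A$ itself is strictly positive. Simplicity follows by the standard convex-combination argument: given a second Hermitian eigenvector $B$, one can choose a real $\alpha$ at which $A+\alpha B\in \mathcal{A}_+$ first fails to be strictly positive, contradicting that every positive eigenvector for $r(R)$ must be strictly positive. The corresponding statements for $R_*$ follow because $R_*$ is irreducible whenever $R$ is, with respect to the Hilbert-Schmidt pairing, so the same argument applies verbatim.

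Part (iv) is the main obstacle. The first step is to observe that strict positivity of $R^N$ propagates: if $R^N(A)>0$ for every nonzero $A\geq 0$ then in particular $R(A)\neq 0$ (otherwise $R^N(A)=R^{N-1}(R(A))=0$, contradicting strict positivity), and iterating yields $R^m$ strictly positive for every $m\geq N$. The technical heart of the argument is then to show that any strictly positive map $S$ on a finite-dimensional $C^\ast$-algebra has $r(S)$ as its \emph{unique} peripheral eigenvalue; this is typically established via a spectral-projection argument showing that any peripheral eigenvector of $S$ must be proportional to the Perron eigenvector, and is the delicate step where the non-commutativity of $\mathcal{A}$ must be handled with care (in contrast to the commutative case, where it is an immediate consequence of the Perron-Frobenius theorem for positive matrices). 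Granting this, if $\lambda$ were a peripheral eigenvalue of $R$ with $\lambda\neq r(R)$, then $\lambda^N$ and $\lambda^{N+1}$ would be peripheral for the strictly positive maps $R^N$ and $R^{N+1}$, hence equal to $r(R)^N$ and $r(R)^{N+1}$ respectively; dividing gives $\lambda=r(R)$, a contradiction.
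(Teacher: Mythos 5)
The paper itself offers no proof of this theorem: it is imported wholesale from the cited references (Evans--H{\o}egh-Krohn and later works), so your attempt can only be assessed on its own merits. Your treatment of (i)--(iii) follows the standard cone / Krein--Rutman route and is essentially sound: the Ces\`{a}ro-average construction works provided you normalise the averages (they may grow polynomially if $r(R)$ carries a Jordan block) and check the limit is nonzero, e.g.\ by pairing with a positive eigenvector of $R_{*}$; Russo--Dye gives (ii); and the boundary-of-the-cone argument gives uniqueness of the positive eigenvector in (iii). One caveat on (iii): your convex-combination argument yields only \emph{geometric} simplicity, whereas "nondegenerate" must mean \emph{algebraic} simplicity for the purposes of this paper (analyticity of $t \mapsto r(\tilde{T}_{t,m})$ via perturbation theory requires $r$ to be an isolated, algebraically simple eigenvalue). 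The missing step is short: if $R(B) = r B + A$ were a Jordan relation above the Perron eigenvector $A$, pairing with the strictly positive eigenvector $\sigma$ of $R_{*}$ gives $r\,\Tr(\sigma B) = \Tr(\sigma R(B)) = r\,\Tr(\sigma B) + \Tr(\sigma A)$, hence $\Tr(\sigma A) = 0$, contradicting strict positivity of both $\sigma$ and $A$.

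The genuine gap is in (iv). Your power trick ($\lambda^{N}$ and $\lambda^{N+1}$ are peripheral for the strictly positive maps $R^{N}$ and $R^{N+1}$, hence equal to $r(R)^{N}$ and $r(R)^{N+1}$, hence $\lambda = r(R)$) is correct, but it only reduces the primitive case to the strictly positive case --- and a strictly positive map is precisely a primitive map with $N=1$, so the lemma you propose to "grant" \emph{is} the analytic core of part (iv); as written, the proof assumes what it must prove. For the completely positive maps actually used in this paper the gap can be closed as follows: normalise $r(S)=1$, use the strictly positive eigenvector $A$ to pass to the unital strictly positive map $\tilde{S}(X) = A^{-1/2} S(A^{1/2} X A^{1/2}) A^{-1/2}$ with faithful invariant state $\sigma$ of $\tilde{S}_{*}$. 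If $\tilde{S}(B) = \lambda B$ with $|\lambda| = 1$, the Schwarz inequality gives $B^{*}B = \tilde{S}(B)^{*}\tilde{S}(B) \leq \tilde{S}(B^{*}B)$; applying $\sigma$ and using faithfulness forces equality, so $B^{*}B$ is a fixed point of $\tilde{S}$, and fixed points of a strictly positive unital map are scalars (if $C = \tilde{S}(C)$ is self-adjoint then $C - c_{\min}\mathbf{1}$ is a positive, non--strictly-positive fixed point, hence zero). Thus $B$ is a multiple of a unitary $U$, and equality in the Schwarz inequality places $U$ in the multiplicative domain of $\tilde{S}$, on which $\tilde{S}$ acts as a $*$-homomorphism; consequently $\tilde{S}$ carries the spectral projections of $U$ to those of $\lambda U$, i.e.\ permutes nonzero projections. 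But a strictly positive map sends every nonzero projection to a strictly positive element, and the only strictly positive projection is $\mathbf{1}$; hence $U$ is scalar and $\lambda = 1$. Note this argument uses the Schwarz inequality, so it covers completely positive (or $2$-positive) maps; for merely positive maps --- the generality in which the theorem is stated --- one genuinely needs the Evans--H{\o}egh-Krohn analysis, which is presumably why the paper cites the result rather than proving it.
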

As a corollary, if the Markov transition operator $T$ is irreducible then it has a unique full rank stationary state (i.e. $T_{*}(\rho_{\text{ss}}) =\rho_{\text{ss}}$), and if $T$ is also primitive then any state 
converges to the stationary state in the long run (mixing or ergodicity property);
$$
\lim_{n\to\infty} (T_*)^n (\rho) = \rho_{\text{ss}},  
$$
or in the Heisenberg picture
\begin{equation*}\label{statlimit}
\lim_{n\rightarrow \infty} T^n(A)= {\rm tr}[A\rho_{\text{ss}}]\mathbf{1}.
\end{equation*}
For reader's convenience we state here a lemma \cite{Hiai2007} which will be used in the proof of the main theorem.

\begin{lem}[\cite{Hiai2007}]\label{thm:sanovlem} Let $\mathcal{A}$ be a finite-dimensional $C^{\ast}$-algebra, let $R$ be a positive linear map on $\mathcal{A}$ and suppose that $R$ has a strictly positive eigenvector. Then for any state $\varphi : \mathcal{A} \to \mathbb{C}$ and any positive $X \in \mathcal{A}$ 
\begin{equation*}
	\lim_{n \rightarrow \infty} \tfrac{1}{n} \log \varphi \left( R^{n}(X)  \right) = \log r(R).
\end{equation*}
\end{lem}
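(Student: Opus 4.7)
The plan is to sandwich $\varphi(R^n(X))$ between two quantities of the form $c\, r(R)^n$, using the strictly positive eigenvector $\xi$ of $R$ for the upper bound and a Perron-Frobenius eigenvector of the Hilbert--Schmidt adjoint $R^{\ast}$ for the lower bound. A preliminary step is to verify that the eigenvalue of $\xi$ is in fact $r(R)$: since $\xi$ is invertible in the finite-dimensional algebra $\mathcal{A}$, every positive $Y$ satisfies $Y\leq C\xi$, so $\|R^n(Y)\|\leq C\lambda^n\|\xi\|$ where $\lambda$ is the eigenvalue; applying this over the positive part of the unit ball and combining with Gelfand's formula forces $r(R)\leq \lambda$, while the reverse inequality is immediate from $\lambda$ being an eigenvalue.

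For the upper bound I would take $C_1=\|\xi^{-1/2}X\xi^{-1/2}\|$, so that $X\leq C_1\xi$, and iterate by positivity of $R$ to obtain $R^n(X)\leq C_1\, r(R)^n\,\xi$. Evaluating at the state $\varphi$ gives
\begin{equation*}
    \varphi(R^n(X))\leq C_1\,\varphi(\xi)\,r(R)^n,
\end{equation*}
and hence $\limsup_{n}\tfrac{1}{n}\log\varphi(R^n(X))\leq \log r(R)$.

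For the lower bound I would apply Thm.~\ref{thm:qpf}(i) to the Hilbert--Schmidt adjoint $R^{\ast}$, which is itself a positive linear map with the same spectrum as $R$, to produce a positive eigenvector $\omega$ with $R^{\ast}(\omega)=r(R)\,\omega$. Duality then yields $\mathrm{tr}(\omega\,R^n(X))=\mathrm{tr}(R^{\ast\,n}(\omega)X)=r(R)^n\,\mathrm{tr}(\omega X)$. When $\omega$ is strictly positive and $X\neq 0$, the trace $\mathrm{tr}(\omega X)$ is strictly positive. To convert this into a lower bound on $\varphi(R^n(X))$, I would use that the density $\rho_\varphi$ of $\varphi$ dominates $\omega$ up to a constant, $\omega\leq C_2\,\rho_\varphi$, to obtain
\begin{equation*}
    \varphi(R^n(X))\geq C_2^{-1}\,r(R)^n\,\mathrm{tr}(\omega X),
\end{equation*}
from which $\liminf_{n}\tfrac{1}{n}\log\varphi(R^n(X))\geq \log r(R)$.

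The main difficulty I anticipate is the lower bound, specifically the compatibility estimate $\omega\leq C_2\rho_\varphi$: this requires the support of $\omega$ to lie inside that of $\varphi$, which is automatic if $\varphi$ is faithful and $\omega$ strictly positive. In the intended applications to the extended transition operators $T_{t,m}$ of Lemma~\ref{lemma.mgf.semigroup}, primitivity of the original Markov transition $T$ will be used to secure irreducibility of $T_{t,m}$, so that Thm.~\ref{thm:qpf}(iii) gives strictly positive primal and dual Perron-Frobenius eigenvectors and the functional induced by $|\psi\rangle$ has full enough support for the estimate to go through.
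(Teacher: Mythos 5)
The paper never proves this lemma --- it is quoted verbatim from \cite{Hiai2007} and used as a black box --- so your attempt has to be judged on its own and against the way the lemma is applied in Theorem \ref{thm:sanov}. Your upper bound is correct and is the standard argument: the preliminary identification $\lambda = r(R)$ via Gelfand's formula is sound, and $X \leq C_1 \xi$ iterates to $R^n(X) \leq C_1 r(R)^n \xi$. The genuine gap is your lower bound. The duality route needs two things that are not hypotheses of the lemma: strict positivity of the dual eigenvector $\omega$ (which requires irreducibility of $R$, i.e.\ Thm.~\ref{thm:qpf}(iii), not merely part (i)) and the domination $\omega \leq C_2 \rho_\varphi$ (which essentially requires $\varphi$ faithful). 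Worse, your proposed rescue for the intended application fails: the functional $Y \mapsto \sum_{i_1,\ldots,i_{m-1}} \langle \psi \vert \left[ Y \right]_{i_1,\ldots,i_{m-1}} \vert \psi \rangle$ is a vector state on each block $Q M_d Q$, and a vector state on a block of dimension at least two annihilates nonzero positive operators, so it is not faithful and the estimate $\omega \leq C_2 \rho_\varphi$ is unavailable exactly where the paper needs the lemma.

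The missing idea is that the lemma must be read (and is used in the paper) with $X$ \emph{strictly} positive; then the lower bound needs no duality at all. Since $X$ and $\xi$ are both invertible positive elements, $X \geq \epsilon \xi$ with $\epsilon = \left( \Vert \xi \Vert \, \Vert X^{-1} \Vert \right)^{-1}$, hence $R^n(X) \geq \epsilon\, r(R)^n \xi$ and $\varphi(R^n(X)) \geq \epsilon\, r(R)^n \varphi(\xi) \geq \epsilon \delta\, r(R)^n$ for \emph{every} state $\varphi$, because $\xi \geq \delta \mathbf{1}$ for some $\delta > 0$. This is precisely why the paper restricts to the subalgebra $\mathcal{B}_m$ cut out by the support projections of $M^{(m)}$: it makes $X = M^{(m)}$ strictly positive there, so the lemma applies with the (non-faithful) vector state. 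Without strict positivity of $X$ the statement is simply false, so no argument can close your gap as you have set it up: take $\mathcal{A} = \mathbb{C}^2$, $R(x_1,x_2) = (2x_1,\, x_1 + x_2)$, which is positive with strictly positive eigenvector $\xi = (1,1)$ and $r(R) = 2$, and take $X = (0,1)$; then $R^n(X) = (0,1)$ for all $n$, so $\tfrac{1}{n} \log \varphi(R^n(X)) \rightarrow 0 \neq \log 2$ for every state $\varphi$, faithful or not. The correct strengthening is on $X$, not on $\varphi$.
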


We now state and prove our main result.

\begin{thm}[\textbf{Sanov theorem for quantum Markov chains}]\label{thm:sanov}
Consider a quantum Markov chain on $M_{d}$ whose transition operator $T$ is primitive. Then the the level $m$ empirical measure \eqref{eq.empirical.measure} satisfies a large deviations principle on $\mathbb{R}^{k^{m}}$. The rate function is the Legendre transform of $\log r(\tilde{T}_{t,m})$, where $r(\tilde{T}_{t,m})$ is the spectral radius of a certain restriction 
$\tilde{T}_{t,m}$ of the extended transition operator defined in \eqref{eq:sanovmaps}.
\end{thm}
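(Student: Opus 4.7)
The plan is to apply the Gärtner–Ellis theorem (Thm.~\ref{th.ge}) to the sequence of moment generating functions $\Gamma_n^{(m)}(t)$. To use Gärtner–Ellis I need to (i) show that the limit $F(t) = \lim_{n\to\infty} \tfrac{1}{n} \log \Gamma_n^{(m)}(t)$ exists and is finite for every $t \in \mathbb{R}^{k^m}$, (ii) identify $F(t)$ with $\log r(\tilde T_{t,m})$ for an appropriate restriction $\tilde T_{t,m}$ of the extended transition operator, and (iii) verify that $F$ is differentiable (in fact analytic) on $\mathbb{R}^{k^m}$. Then the rate function is automatically the Legendre transform \eqref{eq:gerate}, as claimed.

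First I would rewrite the formula from Lemma~\ref{lemma.mgf.semigroup} as
\[
    \Gamma_n^{(m)}(t) = \varphi\!\left( T_{t,m}^{\,n-m+1}(M^{(m)}) \right),
\]
where $\varphi : \mathcal{D}^{m,k}_d \to \mathbb{C}$ is the positive linear functional $\varphi(Y) = \sum_{i_1,\ldots,i_{m-1}} \langle \psi|\,[Y]_{i_1,\ldots,i_{m-1}}\,|\psi\rangle$. This puts the problem in exactly the form required by Lemma~\ref{thm:sanovlem}: provided $T_{t,m}$ (or a suitable restriction) has a strictly positive eigenvector associated to its spectral radius, we obtain
$\lim_n \tfrac{1}{n}\log \Gamma_n^{(m)}(t) = \log r(\tilde T_{t,m})$.

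The main work — and what I expect to be the principal obstacle — is the construction and analysis of $\tilde T_{t,m}$. The extended operator $T_{t,m}$ acts on the block-diagonal algebra $\mathcal{D}^{m,k}_d$ with $k^{m-1}$ blocks, but not every block-label $(i_1,\ldots,i_{m-1})$ corresponds to a word that can actually occur under the chain, so $T_{t,m}$ will in general be reducible on the full algebra. I would define the reduced domain by taking the subalgebra generated by those blocks $(i_1,\ldots,i_{m-1})$ for which the Kraus product $V_{i_{m-1}}\cdots V_{i_1}$ is nonzero (equivalently, the sequence appears in the support of the output distribution), together with the support projections given by primitivity of $T$. Let $\tilde T_{t,m}$ denote the restriction of $T_{t,m}$ to this subalgebra. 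One checks that this subalgebra is invariant under $T_{t,m}$, that $M^{(m)}$ and all iterates $T_{t,m}^n(M^{(m)})$ live inside it, and — this is the key point — that primitivity of $T$ forces $\tilde T_{t,m}$ to be irreducible (and in fact primitive): any block can be reached from any other block by appending outcomes, because primitivity of $T$ provides, for large enough $n$, strictly positive propagation between arbitrary matrix units. Since the weights $e^{t_{i_1,\ldots,i_m}}$ are strictly positive, they do not alter the support pattern, so irreducibility is uniform in $t$.

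With $\tilde T_{t,m}$ irreducible, Thm.~\ref{thm:qpf} gives a strictly positive eigenvector for $r(\tilde T_{t,m})$, so Lemma~\ref{thm:sanovlem} applies and yields $F(t) = \log r(\tilde T_{t,m})$; the function is finite since the algebra is finite dimensional and $r(\tilde T_{t,m}) > 0$. Primitivity then furnishes a spectral gap: $r(\tilde T_{t,m})$ is a simple eigenvalue strictly separated from the rest of the spectrum. Because the map $t \mapsto \tilde T_{t,m}$ depends analytically (indeed entire-analytically) on $t$ through the factors $e^{t_{i_1,\ldots,i_m}}$, standard Kato perturbation theory for isolated simple eigenvalues shows $t \mapsto r(\tilde T_{t,m})$ is real-analytic on $\mathbb{R}^{k^m}$, hence so is $F(t)$. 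The hypotheses of the Gärtner–Ellis theorem are satisfied, and the LDP together with the Legendre-transform formula for the rate function follows. \qed
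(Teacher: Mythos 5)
Your proposal is correct and follows essentially the same route as the paper: Gärtner--Ellis applied to $\Gamma_n^{(m)}$, restriction of $T_{t,m}$ to the subalgebra determined by the support projections of the Kraus products $V_{i_{m-1}}\cdots V_{i_1}$ (the paper's $\mathcal{B}_m$ built from the supports of $M^{(m)}$), primitivity of the restriction inherited from primitivity of $T$ so that Lemma~\ref{thm:sanovlem} and Theorem~\ref{thm:qpf} apply, and Kato perturbation theory for the smoothness of $t \mapsto \log r(\tilde T_{t,m})$. The only difference is one of detail: the paper carries out the invariance check via a kernel argument and proves primitivity by the explicit identity $[\tilde T_{0,m}^{\,n}(X)]_{i_1,\ldots,i_{m-1}} = V_{i_1}^{\ast}\cdots V_{i_{m-1}}^{\ast}\, T^{n-m+1}(\hat X)\, V_{i_{m-1}}\cdots V_{i_1}$, which is exactly the ``propagation'' mechanism you invoke informally.
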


\begin{proof}

By the Gartner-Ellis Theorem \ref{th.ge} it suffices to show that the limit
\begin{equation}\label{eq:proofge}
F^{(m)} (t) =\lim_{n\to\infty} \tfrac{1}{n} \log \Gamma_{n}^{(m)}(t)
\end{equation}
exists for all $t$ and $F^{(m)}$ is a continuous function. As it is defined, $T_{t,m}$ may not satisfy the conditions of Lem. \ref{thm:sanovlem}, but we will show that \eqref{eq:proofge} holds the same when $T_{t,m}$ is replaced by a certain restriction $\tilde{T}_{t,m}$ which does satisfy the conditions.

\textbf{1. Invariance.} Let $\mathcal{B}_{m}$ be the (non-unital) subalgebra of $\mathcal{D}^{m,k}_{d}$ given by
\begin{equation*}
    \mathcal{B}_{m} = \bigoplus_{i_{1},\ldots,i_{m-1}}^{k} Q_{i_{1},\ldots,i_{m-1}} M_{d} Q_{i_{1},\ldots, i_{m-1}} 
\end{equation*}
where $Q$ is the projection onto the support of $M^{(m)}$. We will show that 
$\mathcal{B}_{m}$ is invariant under $T_{t,m}$, i.e.
\begin{equation*}
    \left[ T_{t,m}(Y)\right]_{i_{1},\ldots,i_{m-1}} \in Q_{i_{1},\ldots i_{m-1}}M_{d} Q_{i_{1},\ldots,i_{m-1}}
\end{equation*}
for every $Y \in \mathcal{B}_{m}$. For this it suffices to show that
\begin{equation*}
    V_{i_{1}}^{\ast} \left[ Y \right]_{i_{2}, \ldots, i_{m}} V_{i_{1}} (u) = 0
\end{equation*}
for all $u \in \text{ker} \left[ M^{(m)} \right]_{i_{1},\ldots,i_{m-1}},$ or equivalently (since $\text{ker}(A^{\ast}A) = \text{ker}(A)$) all $u \in \mbox{ker}(V_{i_{m-1}} \cdots V_{i_{1}})$. Now
\begin{equation*}
    \left[Y\right]_{i_{2},\ldots,i_{m}}(v) = 0 \quad \text{for all }v \in \text{ker}(V_{i_{m}}\cdots V_{i_{2}})
\end{equation*}
and we either have $u \in \text{ker}(V_{i_{1}})$ or $V_{i_{1}}(u) \in \text{ker}(V_{i_{m-1}}\cdots V_{i_{2}})$ we conclude that for every $u \in \text{ker}\left[Y \right]_{i_{1}, \ldots, i_{m-1}}$ we have $u \in \text{ker}(\left[T(Y)\right]_{i_{1}, \ldots, i_{m-1}})$, proving that $T_{t,m}$ leaves $\mathcal{B}_{m}$ invariant.

Let us denote by $\tilde{T}_{t,m}$ the restriction of $T_{t,m}$ to $\mathcal{B}_{m}$. Then, since 
$M^{(m)}\in\mathcal{B}_{m}$ the moment generating function can be expressed as 
$$
\Gamma_{n}^{(m)}(t) 
= \langle \psi \vert \sum_{i_{1}, \ldots, i_{m-1} =1}^{k} \left[ \tilde{T}_{t,m}^{n-m+1}\left(M^{(m)}\right)\right]_{i_{1}, \ldots, i_{m-1}} \vert \psi \rangle
$$

\textbf{2. Primitivity.} Since $\tilde{T}_{t,m} > c \tilde{T}_{0,m}$ for some positive constant $c$, it suffices to show that 
there exists $n \in \mathbb{N}$ such that, for any $X \in \mathcal{B}_{m}$,
\begin{equation*}
    \tilde{T}_{0,m}^{n}(X) \geq c^\prime \mathbf{1}_{\mathcal{B}}
\end{equation*}
for some $c^\prime > 0$. We can assume that $n > m$, in which case 
\begin{align*}
    &\left[ \tilde{T}_{0,m}^{n}(X)\right]_{i_{1},\ldots,i_{m-1}} \\ &=\sum_{i_{m}, \ldots, i_{m+n-1} = 1}^{k} V_{i_{1}}^{\ast} \cdots V_{i_{n}}^{\ast} \left[ X \right]_{i_{n+1}, \ldots, i_{m+n-1}} V_{i_{n}} \cdots V_{i_{1}}\\
    &= V_{i_{1}}^{\ast} \cdots V_{i_{m-1}}^{\ast} \left( \sum_{i_{m},\ldots, i_{n} = 1}^{k} V_{i_{m}}^{\ast} \cdots V_{i_{n}}^{\ast} \hat{X} V_{i_{n}} \cdots V_{i_{m}} \right) V_{i_{m-1}} \cdots V_{i_{1}}
\end{align*}
where $\hat{X}$ is the sum of the blocks of $X$ given by
\begin{equation*}
    \hat{X} = \sum_{i_{1}, \ldots, i_{m-1}=1}^{k} \left[ X \right]_{i_{1}, \ldots, i_{m-1}}.
\end{equation*}
The remaining sum can be written in terms of the original transition operator $T$ associated to the quantum Markov chain; recall that for $Y \in M_{d}$
\begin{equation*}
    T (Y) = \sum_{i=1}^{k} V_{i}^{\ast} Y V_{i}
\end{equation*}
from which we obtain the expression
\begin{equation*}
    \left[ \tilde{T}_{0,m}^{n}(X)\right]_{i_{1},\ldots,i_{m}} = V_{i_{1}}^{\ast} \cdots V_{i_{m-1}}^{\ast} T^{n-m+1}(\hat{X}) V_{i_{m-1}} \cdots V_{i_{1}}.
\end{equation*}
Since the original Markov chain is assumed to be primitive, there exists $r \in \mathbb{N}$ such that $T^{r}(\hat{X}) \geq c \mathbf{1}$ for some $c > 0$. Therefore, with $n \geq r+m-1$,
\begin{align*}
    \left[ \tilde{T}_{0,m}^{n}(X)\right]_{i_{1}, \ldots, i_{m-1}} &\geq c V_{i_{1}}^{\ast} \cdots V_{i_{m-1}}^{\ast} V_{i_{m-1}} \cdots V_{i_{1}}\\
        &\geq c^\prime Q_{i_{1}, \ldots, i_{m-1}}\\
        &= c^\prime \left[ \mathbf{1}_{\mathcal{B}} \right]_{i_{1},\ldots,i_{m-1}}.
\end{align*}

Using the invariance and primitivity property, we can apply  Lemma \ref{thm:sanovlem}, to find that the limiting moment generating function is given by  
\begin{equation*}
F^{(m)} (t) =\lim_{n\to\infty} \tfrac{1}{n} \log \Gamma_{n}^{(m)}(t) = r( \tilde{T}_{t,m}).
\end{equation*}
Moreover, since $\tilde{T}_{t,m}$ is an analytic perturbation in $t$ of $\tilde{T}_{0,m}$, the spectral radius $t \mapsto r(T_{t,m})$ is a smooth function \cite{Kato1976}, so the large deviation principle follows from the G\"{a}rtner-Ellis theorem. 

%
%
%

\end{proof}

%
%

As a corollary to our main result, we establish a central limit theorem for each of the empirical measures.

\begin{col}\label{cor:clt}
Let $\set{\hat{\mathbb{P}}^{(m)}_{n}}$ be the sequence of distributions of the empirical measure of length $m$ defined in Eq. \eqref{eq.empirical.measure}. Then $\set{\hat{\mathbb{P}}^{(m)}_{n}}$ satisfies the Central Limit Theorem: that is (cf. Eq. \eqref{eq.ld.sanov}), as $n \rightarrow \infty$,
\begin{equation*}
\sqrt{n} \left( \hat{\mathbb{P}}^{(m)}_{n} -  p^{(m)}\right) \overset{\mathcal{D}}{\longrightarrow} N(0,V^{(m)})
\end{equation*}
where $\mathcal{D}$ denotes convergence in distribution. Here $p^{(m)}$ and $V^{(m)}$ are the mean and variance with respective components
\begin{equation*}
p^{(m)}_{i} = \left. \frac{\partial \log r(\tilde{T}_{t,m}) }{\partial t_{i}}\right|_{t=0}, \qquad V^{(m)}_{i,j} = \left. \frac{\partial^{2} \log r(\tilde{T}_{t,m}) }{\partial t_{i} \partial t_{j}}\right|_{t=0}, \quad 1 \leq i,j \leq k^m.
\end{equation*}
\end{col}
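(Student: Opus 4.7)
The plan is to deduce the central limit theorem from pointwise convergence of the Laplace transform of the centered and rescaled empirical measure, via Curtiss's continuity theorem. Writing
\[
    \mathbb{E}\!\left[\exp\!\left(\sqrt{n}\,\langle t, \hat{\mathbb{P}}^{(m)}_n - p^{(m)}\rangle\right)\right] = e^{-\sqrt{n}\,\langle t, p^{(m)}\rangle}\,\Gamma_n^{(m)}\!\left(t/\sqrt{n}\right),
\]
the aim is to show the right-hand side converges to $\exp\bigl(\tfrac{1}{2}\,t^{\top} V^{(m)} t\bigr)$ for every $t$ in a neighborhood of the origin, which is the moment generating function of $N(0,V^{(m)})$. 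The Gärtner--Ellis input from Theorem \ref{thm:sanov} supplies only logarithmic asymptotics for $\Gamma_n^{(m)}$, whereas here a sharp multiplicative asymptotic is required; this is where one must go beyond the LDP and use the spectral structure of $\tilde{T}_{t,m}$.

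The key ingredient is analytic perturbation theory for the leading eigenvalue of $\tilde{T}_{t,m}$. The proof of Theorem \ref{thm:sanov} has already established that $\tilde{T}_{0,m}$ is primitive on $\mathcal{B}_m$, so by the quantum Perron--Frobenius theorem (Thm. \ref{thm:qpf}) its spectral radius $1$ is a simple isolated eigenvalue. Kato's theorem then provides, for $t$ in a complex neighborhood $U$ of $0$, an analytic simple eigenvalue $\lambda(t) = r(\tilde{T}_{t,m})$ and an analytic spectral projection $P_t$, together with the decomposition
\[
    \tilde{T}_{t,m}^{\,n} = \lambda(t)^{n} P_t + R_t^{\,n}, \qquad r(R_t) \le q\,|\lambda(t)|,
\]
uniformly in $t \in U$ for some fixed $q < 1$. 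Inserting this into the representation of $\Gamma_n^{(m)}$ from Lemma \ref{lemma.mgf.semigroup} yields the sharp asymptotic
\[
    \Gamma_n^{(m)}(t) = \lambda(t)^{n-m+1}\,C(t)\,\bigl(1 + O(q^{n})\bigr),
\]
with $C(t) = \langle\psi|\sum_{i_1,\ldots,i_{m-1}} [P_t(M^{(m)})]_{i_1,\ldots,i_{m-1}} |\psi\rangle$ analytic on $U$ and $C(0) = 1$ (since $\Gamma_n^{(m)}(0) = 1$ for every $n$ and $\lambda(0)=1$).

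Substituting $t \mapsto t/\sqrt{n}$ and Taylor expanding $F^{(m)}(s) = \log\lambda(s)$ about $s=0$ to second order gives
\[
    (n-m+1)\,F^{(m)}(t/\sqrt{n}) = \sqrt{n}\,\langle \nabla F^{(m)}(0), t\rangle + \tfrac{1}{2}\,t^{\top} V^{(m)} t + O(n^{-1/2}),
\]
while $\log C(t/\sqrt{n}) \to \log C(0) = 0$. Identifying $p^{(m)} = \nabla F^{(m)}(0)$ and $V^{(m)}=\nabla^{2}F^{(m)}(0)$, the $\sqrt{n}$-order terms cancel exactly against the centering factor $e^{-\sqrt{n}\,\langle t, p^{(m)}\rangle}$, leaving $\tfrac{1}{2}\,t^{\top} V^{(m)} t + o(1)$; exponentiating and invoking Curtiss's theorem delivers the CLT. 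The identification $p^{(m)} = \nabla F^{(m)}(0)$ is checked by differentiating the sharp asymptotic at $t=0$ to obtain $\mathbb{E}[\hat{\mathbb{P}}^{(m)}_n] \to \nabla F^{(m)}(0)$, and comparing with the almost sure ergodic limit of the empirical measure, which exists by primitivity of $T$. The main technical point, and the only place the argument is not routine bookkeeping, is the upgrade from logarithmic to multiplicatively sharp asymptotics for $\Gamma_n^{(m)}$; this rests entirely on the primitivity of $\tilde{T}_{0,m}$ already established within the proof of Theorem \ref{thm:sanov}, together with Kato's analytic perturbation theorem.
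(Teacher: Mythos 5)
Your proof is correct, but it takes a genuinely different route from the paper. The paper's own proof is essentially two sentences: it observes that the analytic perturbation argument from Theorem \ref{thm:sanov} extends the convergence $\tfrac{1}{n}\log\Gamma_n^{(m)}(t) \to F^{(m)}(t)$ to a complex neighbourhood of $t=0$ with an analytic limit, and then invokes the main result of \cite{Letters1993} (Bryc's theorem), which converts exactly this hypothesis---convergence of scaled cumulant generating functions to an analytic limit on a complex neighbourhood of the origin---into a central limit theorem. You instead unfold that black box: you use the spectral gap of the primitive map $\tilde{T}_{0,m}$ (established in part 2 of the proof of Theorem \ref{thm:sanov}) together with Kato perturbation theory to get the multiplicatively sharp asymptotic $\Gamma_n^{(m)}(t) = \lambda(t)^{n-m+1}C(t)\bigl(1+O(q^n)\bigr)$, and then run the classical argument: substitute $t/\sqrt{n}$, Taylor-expand $\log\lambda$, cancel the $\sqrt{n}$-order term against the centering, and conclude via Curtiss's continuity theorem (with Cram\'{e}r--Wold for the multivariate statement). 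Both proofs rest on the same two ingredients---primitivity of $\tilde{T}_{0,m}$ and analyticity of the Perron eigenvalue---but yours is self-contained where the paper's is a citation, and it delivers strictly more (the prefactor $C(t)$ and exponential error control), at the cost of one technical point you assert rather than prove: the uniformity of the gap bound $r(R_t)\le q\,|\lambda(t)|$ on a neighbourhood of the origin, which needs a short compactness/resolvent-continuity argument on top of Kato's theorem. Two minor remarks: your closing step identifying $p^{(m)} = \nabla F^{(m)}(0)$ via the ergodic limit is superfluous, since the corollary \emph{defines} $p^{(m)}$ as $\nabla \log r(\tilde{T}_{t,m})\vert_{t=0} = \nabla F^{(m)}(0)$, so the cancellation is definitional; and your claim $C(0)=1$ can be seen even more directly from $P_0 M^{(m)} = M^{(m)}$ (simplicity of the eigenvalue $1$ with eigenvector $M^{(m)}$), which also shows the remainder term vanishes identically at $t=0$.
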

\begin{proof}
Our main result relied on the convergence of the logarithmic moment generating functions $ \tfrac{1}{n} \log \Gamma_{n}^{(m)}(t)$ in Eq. \eqref{eq:proofge} to $F^{(m)} (t)$. For the purpose of establishing an LDP, it was sufficient to consider only real values for the parameter $t$. However, the same analytic perturbation arguments can be used to prove that locally in complex neighbourhood of $t=0$, the Eq. \eqref{eq:proofge} holds and the limiting function $F^{(m)} (z)$ is analytic in that region. The Central Limit Theorem for $\set{\hat{\mathbb{P}}^{(m)}_{n}}$ is then a consequence of the main result of \cite{Letters1993}.

\end{proof}
We note that since the sequence $\set{\hat{\mathbb{P}}^{(m)}_{n}}$ is effectively a sequence of probability distributions on the probability simplex of dimension $k^m -1$, the variance $V^{(m)}$ is degenerate in that at least one of its eigenvalues vanishes, corresponding to the degree of freedom in $\mathbb{R}^{k^{m}}$ orthogonal to the probability simplex.

We end this section with a brief note on what happens when the transition operator is not irreducible. The G\"{a}rtner-Ellis theorem relies on differentiability of the logarithmic moment generating function $F(t)$ defined in Eq. \ref{eq:gelimit} (an hypothesis which may be weakened to smoothness of $F(t)$ in a neighbourhood of the origin). We use irreducibility to ensure that the spectral radius, and therefore the logarithmic moment generating function, is a differentiable function. 

If $F(t)$ is not differentiable at $t=0$, the first moments obtained as $\partial_{t} F(t)$ at $t=0$ are not well defined, corresponding to a breaking down of a law of large numbers. On the level of the transition operator this corresponds to the dynamics breaking up into two disjoint parts, each with its own logarithmic moment generating function. In this case an LDP may still hold, but with a non-convex rate function. The rate function obtained from the G\"{a}rter-Ellis theorem is a Legendre transform, and is the convex envelope of the actual rate function. This non-convexity of the large deviations rate function is associated to dynamical phase transitions in statistical mechanics \cite{Ellis1995,Touchette2009} and more recently in open quantum systems \cite{Garrahan2011,Lesanovsky2013,VanHorssen2014}.

\section{Examples}\label{sec.examples}
Here we describe two examples illustrating the mathematical results.
The first example is a quantum Markov chain where the large deviations rate function associated to the empirical measure and pair empirical measure shows dependence on some physical parameter. In the second example, the empirical measure rate function shows no dependence on a physical parameter, but the pair empirical measure rate function does. This example shows that, in order to uncover dynamical phase transitions through non-analyticities in a large deviations rate function, it may be necessary to consider higher-level statistics.

\subsection*{Example 1}

\begin{figure}
    \subfloat[]
    {\includegraphics[width=.6\textwidth]{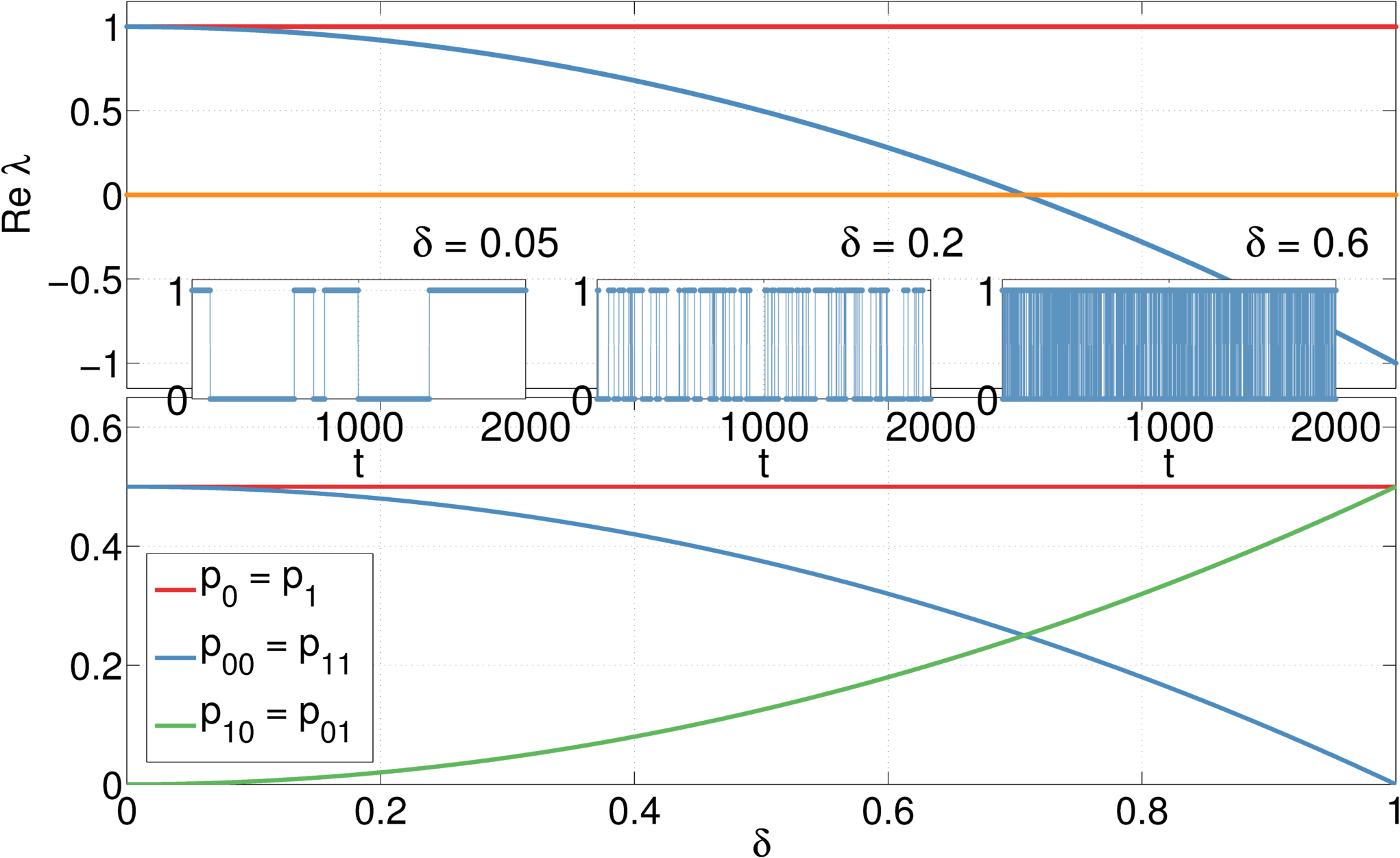}} \
    \subfloat[]   
    {\includegraphics[width=.38\textwidth]{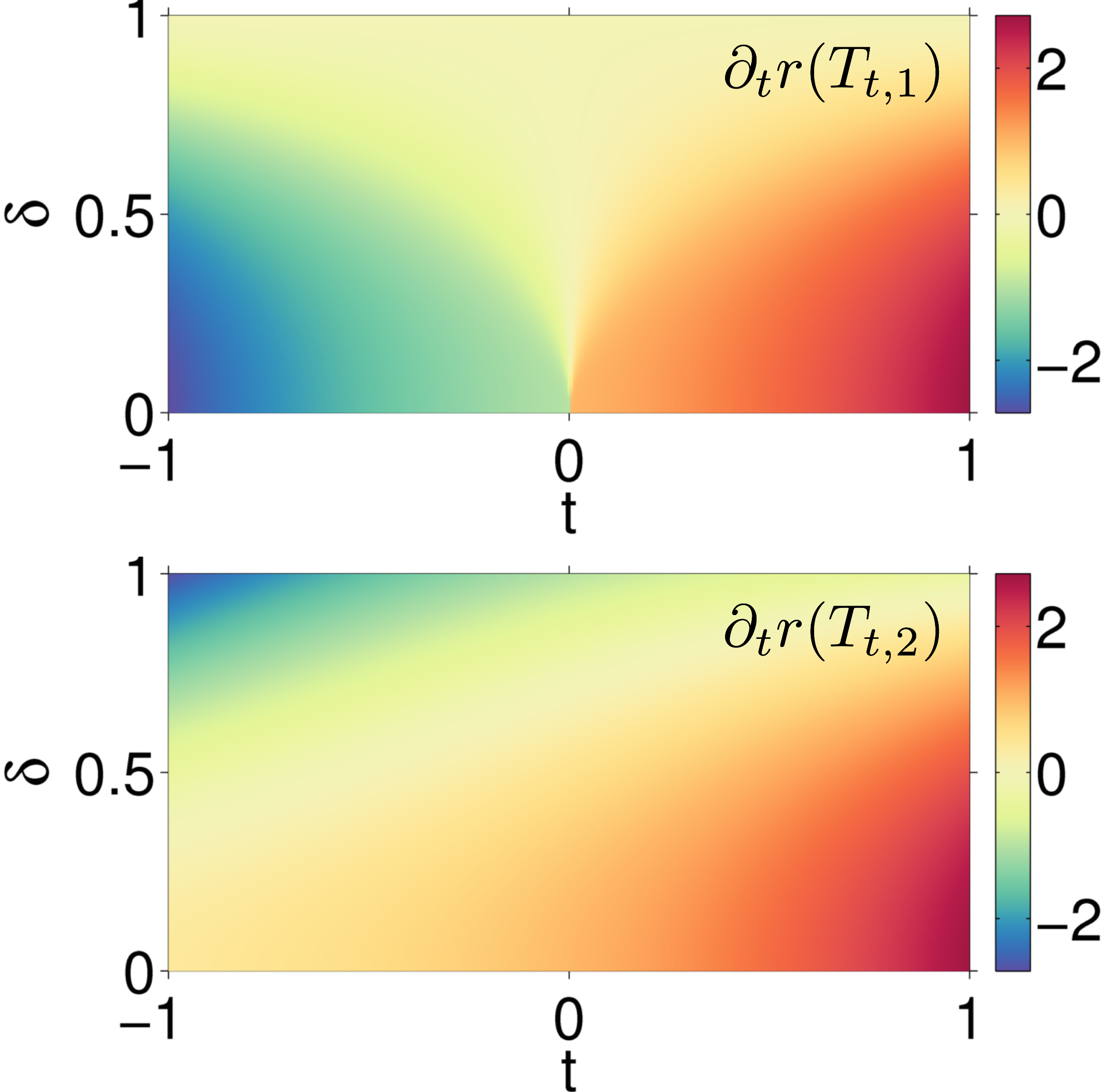}}     
    \caption{\textbf{(a):} Eigenvalues $\lambda$ of $T$ (top), statistics of trajectories (bottom):  while the probabilities $p_{k}$ are equal and constant, the pair probabilities $p_{i,j}$ vary with $\delta$, which is reflected in the jump trajectories (see inset); \textbf{(b):} Derivatives of Sanov theorem level 1 and 2 spectral radii as functions of $\delta$ and the LD parameter with parametrisations $t [-1,1] \in \mathbb{R}^{2}$ and $t [1, -1; -1, 1] \in M_{2}$, respectively. Note the discontinuity in $\partial_{t} r(T_{t,1})$ at $\delta=0$ where the LDP no longer holds.  }
    \label{fig:ex1}
\end{figure}

Consider the two-dimensional quantum Markov chain with transition operator $T$ acting on a density matrix $\rho \in M_{2}$ as
\begin{equation}\label{eq:ex1}
    T(\rho) = V_{0} \rho V_{0}^{\ast} + V_{1} \rho V_{1}^{\ast}
\end{equation}
where the Kraus operators are given by
\begin{equation*}\label{eq:example1}
    V_{0} = \mat{0}{\delta}{0}{\epsilon},\quad V_{1} = \mat{\epsilon}{0}{\delta}{0}
\end{equation*}
where $\epsilon = \sqrt{1-\delta^{2}}$ with $0 \leq \delta \leq 1$. Defining $\ket{u} = \delta \ket{0} + \epsilon \ket{1}$ and $\ket{d} = \epsilon \ket{0} + \delta \ket{1}$, the Kraus operators can be expressed as $V_{0} = \ket{u} \bra{1}$, $V_{1} = \ket{d} \bra{0}$; the parameter $\delta$ interpolates between a trivial process at $\delta = 0$, where the Kraus operators project onto the eigenstates of $\sigma_{z}$, and the cyclic process with  $V_{0} = \ket{0}\bra{1} = V_{1}^{\ast}$ at $\delta = 1$. The eigenvalues of $T$ are plotted in Fig. \ref{fig:ex1}, showing that the set of eigenvalues reduces to $\set{0,1}$ at $\delta=0$ and $\set{-1,0,1}$ at $\delta=1$.

Considering the level 1 and 2 statistics in the context of our Sanov theorem, in Fig. \ref{fig:ex1} we have plotted the probabilities $p_{k}$ and $p_{i,j}$ to obtain $X^{(n)} = k$ and $(X^{(n)},X^{(n+1)}) = (i,j)$, respectively, along the output trajectory in the stationary regime. By Thm. \ref{thm:sanov}, the empirical measures associated to these jump statistics satisfy an LDP for  $\delta > 0$, with rate function computed in Eq. \eqref{eq:gerate} as the Legendre-Fenchel transformation of the spectral radius of $r(T_{t,m})$ of the associated transition operator. The first moment of the level $k$ empirical measure is computed as the derivative $\partial_{t} r(T_{t,k})$ evaluated at $t=0$; Fig. \ref{fig:ex1} shows the derivatives of the spectral radii $\partial_{t} r(T_{t,1})$ and $\partial_{t} r(T_{t,2})$. In this case, both the level 1 and 2 spectral radii (and therefore the rate functions) show a dependence on $\delta$, even though the first moment $\partial_{t} r(T_{t,1}) \vert_{t=0}$ is constant.

\subsection*{Example 2}

We now consider a quantum Markov chain which also satisfies Thm. \ref{thm:sanov}, but where the level 1 statistics are independent of the physical parameter of the model. Let $\rho \in M_{2}$ be a density matrix and define the transition operator $T$ as in Eq. \eqref{eq:ex1}, but where the Kraus operators are now defined as
\begin{equation*}
    V_{0} = \frac{1}{\sqrt{2}}\mat{1}{0}{i \sin \omega}{\cos \omega},\quad V_{1} = \frac{1}{\sqrt{2}} \mat{\cos \omega}{i \sin \omega}{0}{1}
\end{equation*}
where $0 \leq \omega \leq 2\pi$ (these dynamics may be obtained from a particular choice of parameters in a Heisenberg XYZ interaction between each noise atom and the system.)

Considering how the stationary states change with $\omega$ (see Fig. \ref{fig:ex2}) we note that for $\omega=0$, $T$ becomes the identity map with full degeneracy of the eigenvalue $1$. For $\omega \ll 1$ perturbation of the degenerate eigenvalue $1$ shows that the eigenvalues split into $\lambda_{1} = 1$, $\lambda_{2} \approx 1 - \omega^{2}/2$ and a pair of complex conjugate eigenvalues $\lambda_{3} \approx 1 + i \omega$, $\lambda_{4} \approx \bar{\lambda}_{3}$. For $0 < \omega < \pi$ the stationary state $\rho_{\text{ss}}$ is unique and a multiple of the identity, $\rho_{\text{ss}} = \tfrac{1}{2}\textbf{1}.$ At the point $\omega = \pi$ the Kraus operators are unitarily equivalent and stationary states are of the form $p \ket{0}\bra{0} + (1-p) \ket{1} \bra{1}$ with $0 \leq p \leq 1$.

\begin{figure}
    \subfloat[]
    {\includegraphics[width=.6\textwidth]{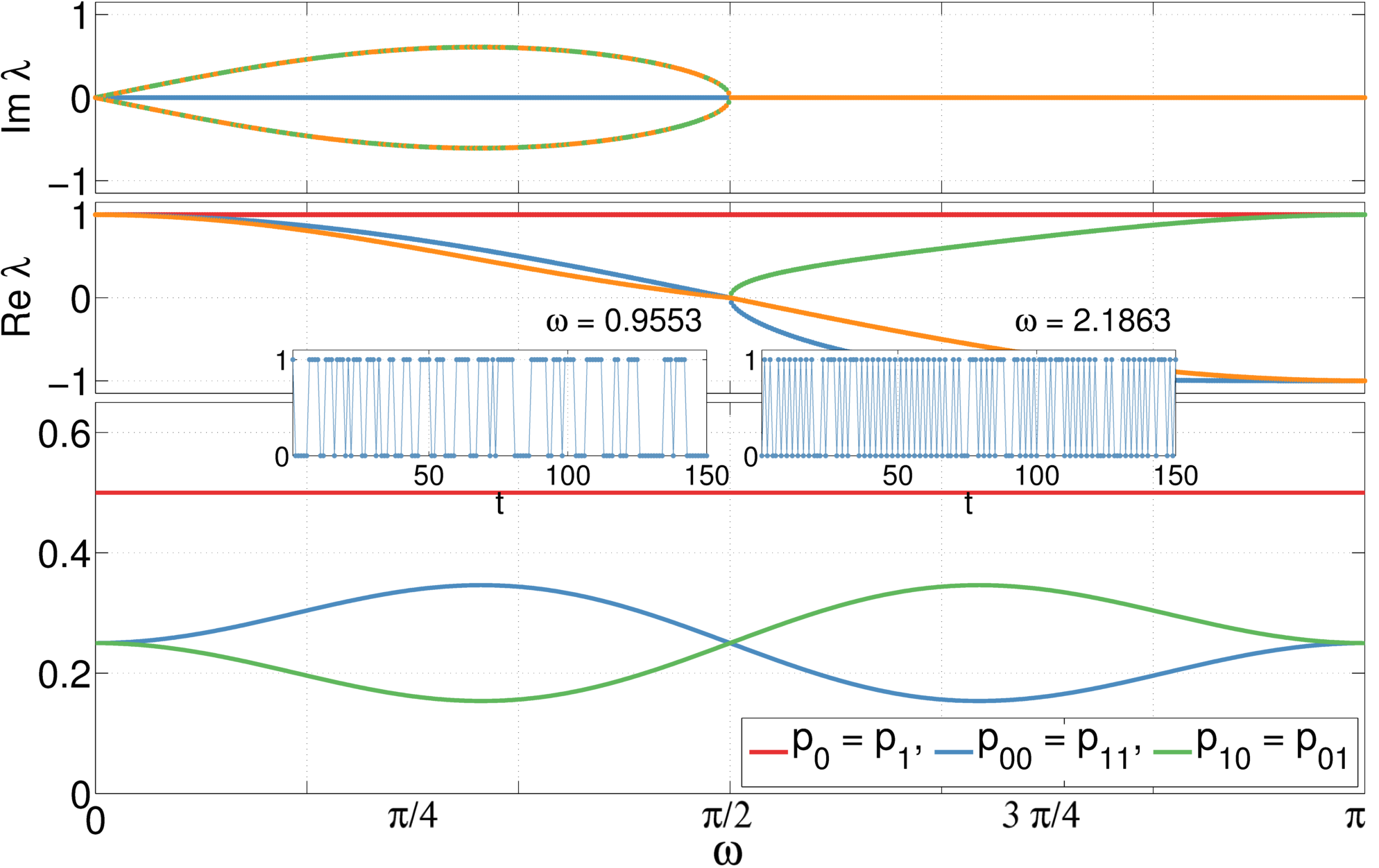}}\
    \subfloat[]   
    {\includegraphics[width=.38\textwidth]{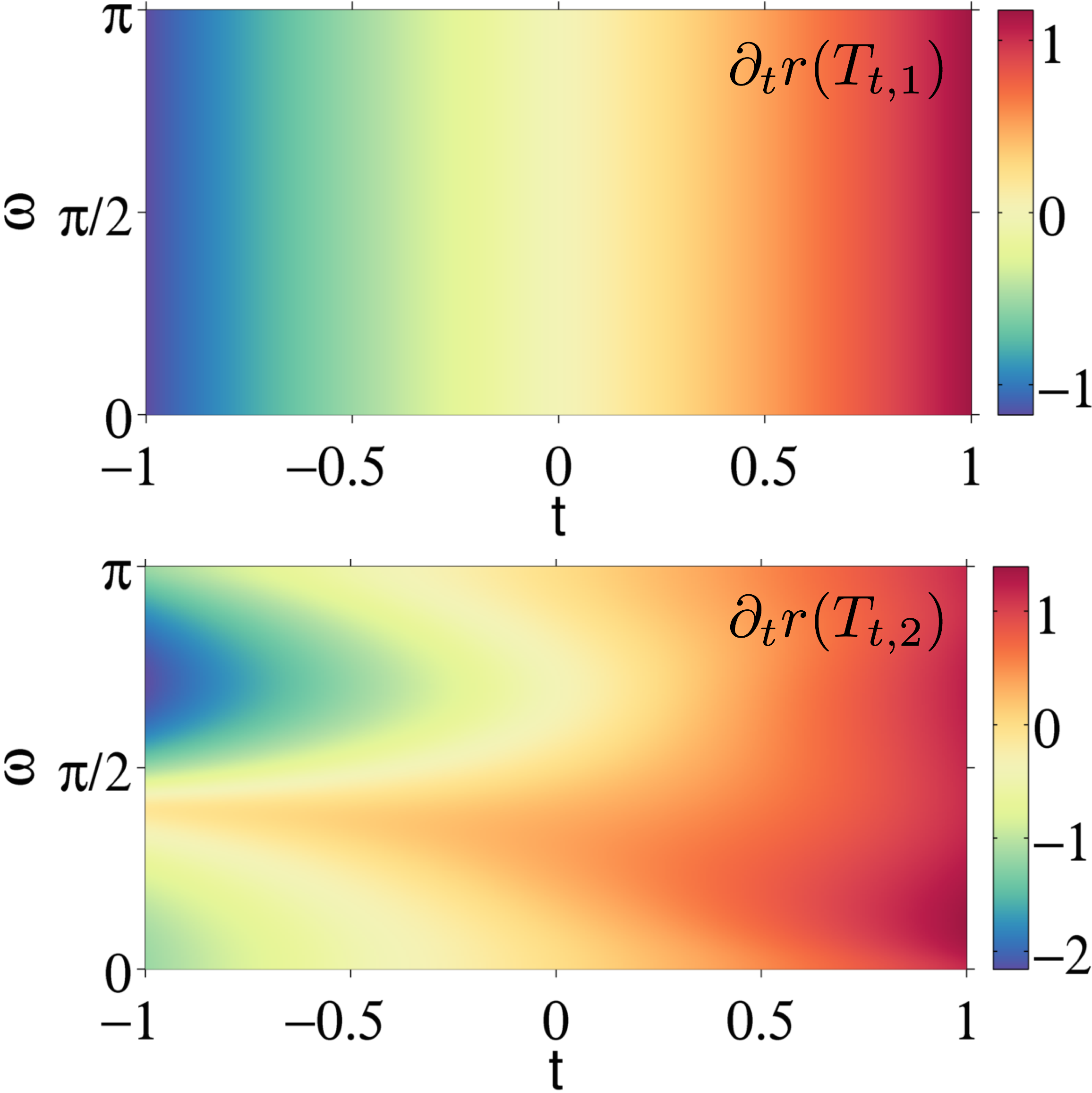}}      
    \caption{\textbf{(a):} Eigenvalues $\lambda$ of $T$ (top), statistics of trajectories (bottom) and jump trajectories (inset); \textbf{(b):} Derivatives of Sanov theorem level 1 and 2 spectral radii as functions of $\delta$ and the LD parameter with same parametrisation as in Fig. \ref{fig:ex1}. }
    \label{fig:ex2}
\end{figure}

As in the previous example, we consider the jump probabilities $p_{k}$ and $p_{i,j}$ in the stationary regime (see Fig. \ref{fig:ex2}). For $0 < \omega < \pi$ the probabilities $p_{k}, k=0,1$ are independent of $\omega$ with
\begin{equation*}
    p_{k} = \Tr \left( V_{k} \rho_{\text{ss}} V_{k}^{\ast} \right) = \tfrac{1}{2} \Tr \left( V_{k}^{\ast} V_{k} \right) = \tfrac{1}{2}
\end{equation*}
and similarly we obtain 
\begin{eqnarray*}
    p_{0,0} &= \tfrac{1}{4}\left(1 - \sin^{2} \omega \cos \omega \right) = p_{1,1},\\
    p_{0,1} &= \tfrac{1}{4}\left(1 + \sin^{2} \omega \cos \omega \right) = p_{1,0};
\end{eqnarray*}
as shown in Fig. \ref{fig:ex2} this dependence on $\omega$ is reflected in the output trajectories, with increased intermittency when $p_{0,1} > p_{0,0}$.

By Thm. \ref{thm:sanov}, the empirical measure associated to the level 1 and level 2 statistics on the output of this quantum Markov chain satisfies an LDP, with rate functions obtained from the corresponding spectral radii $r(T_{t,k})$. As shown in Fig. \ref{fig:ex2}, the level 2 spectral radius $r(T_{t,2})$ depends on $\omega$, while $r(T_{t,1})$ is constant, as we will now show.

\begin{lem}For the quantum Markov chain defined by $T$, the large deviations rate function associated to the level 1 empirical measure (i.e. sample mean) of the output process is independent of $\omega$, for $0<\omega<\pi$.
\end{lem}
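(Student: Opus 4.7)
The plan is to reduce the statement to showing that $r(\tilde T_{t,1}) = (e^{t_0}+e^{t_1})/2$, independently of $\omega$. Once this is established, the level-$1$ rate function is the Legendre transform of $\log\bigl((e^{t_0}+e^{t_1})/2\bigr)$ and is manifestly $\omega$-independent. Observe that for $m=1$ the block algebra $\mathcal{B}_1$ reduces to $M_2$ (since $M^{(1)}=\mathbf 1$), so the restricted extended transition operator $\tilde T_{t,1}$ coincides with $T_{t,1}: A\mapsto e^{t_0} V_0^* A V_0 + e^{t_1} V_1^* A V_1$ on $M_2$.

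The core computation is to exhibit the strictly positive eigenvector
\[
A_t := \mathrm{diag}(e^{t_0},\, e^{t_1})
\]
and verify $T_{t,1}(A_t) = \tfrac{e^{t_0}+e^{t_1}}{2}\,A_t$ by a short $2\times 2$ matrix calculation. In that calculation the off-diagonal entries of the two terms $e^{t_0}V_0^* A_t V_0$ and $e^{t_1}V_1^* A_t V_1$ cancel against each other, while the diagonal entries simplify via $\sin^2\omega+\cos^2\omega=1$. The eigenvector itself is natural once one notes that the stationary state is $\tfrac{1}{2}\mathbf 1$ and that the tilting weights the $i$-th outcome by $e^{t_i}$, so a diagonal reweighting is the first thing to try.

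With the eigenvalue identified, the conclusion follows from Thm.~\ref{thm:qpf}. The map $T_{t,1}$ inherits primitivity from $T$ through the pointwise bound $T_{t,1}(A)\ge \min(e^{t_0},e^{t_1})\,T(A)$ for $A\ge 0$; iterating gives $T_{t,1}^n(A)\ge \min(e^{t_0},e^{t_1})^n\,T^n(A)$, which is strictly positive for sufficiently large $n$ by the primitivity of $T$. Since $A_t$ is strictly positive, the standard duality argument (pair $A_t$ against the strictly positive Perron eigenvector of the adjoint map) forces its eigenvalue to equal $r(T_{t,1})$. Therefore $r(T_{t,1}) = (e^{t_0}+e^{t_1})/2$ independently of $\omega$, and the Legendre transform in Thm.~\ref{thm:sanov} yields the claim. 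The only real difficulty is guessing the eigenvector $A_t$; the verification and the Perron--Frobenius argument are routine.
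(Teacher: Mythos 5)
Your proof is correct, and it takes a route that differs from the paper's in instructive ways, although both hinge on exhibiting an explicit $\omega$-independent eigenpair of the tilted transition operator. The paper first pins down the candidate eigenvalue by specialising to $\omega = \pi/2$, where the output process is literally i.i.d.\ fair coin tossing; taking the Legendre transform of the known coin-toss rate function gives $r(T_{t,1}) = \tfrac{1}{2}(e^{t}+1)$, and the paper then verifies for general $\omega$ that this $\lambda_{t}$ remains an eigenvalue, exhibiting an eigenmatrix of the form $\rho_{\text{ss}} + \tfrac{1}{2}(\lambda_{t}-1)M_{\omega}$ (a perturbation of the stationary state, with off-diagonal entries), and concludes from $\lambda_{t} \rightarrow 1$ as $t \rightarrow 0$ that the moments at $t=0$ are $\omega$-independent. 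You instead guess the Heisenberg-picture eigenvector $A_{t} = \mathrm{diag}(e^{t_{0}}, e^{t_{1}})$ directly, and your $2\times 2$ computation is right: the off-diagonal cross terms of $e^{t_{0}}V_{0}^{\ast}A_{t}V_{0}$ and $e^{t_{1}}V_{1}^{\ast}A_{t}V_{1}$ cancel exactly, the diagonals collapse via $\sin^{2}\omega + \cos^{2}\omega = 1$, and the eigenvalue is $(e^{t_{0}}+e^{t_{1}})/2$. Your route buys two things the paper's write-up does not make explicit. First, you treat the full two-parameter tilting $t = (t_{0},t_{1}) \in \mathbb{R}^{2}$, so you obtain $\omega$-independence of the entire level-1 rate function on $\mathbb{R}^{2}$; the paper's scalar computation is recovered as the special case $(t_{0},t_{1})=(0,t)$. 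Second, your identification of $\lambda_{t}$ with the spectral radius --- primitivity of $T_{t,1}$ inherited through $T_{t,1}(A) \geq \min(e^{t_{0}},e^{t_{1}})\, T(A)$, followed by pairing the strictly positive eigenvector $A_{t}$ against the strictly positive Perron eigenvector of the adjoint map, which exists by Thm.~\ref{thm:qpf}(iii) --- is valid for every $t$ separately, whereas the paper's argument as stated only identifies $\lambda_{t}$ as the Perron root near $t=0$ (by nondegeneracy and continuity) and strictly speaking needs analytic continuation to cover all $t$; in this respect your closing step is tighter. What the paper's approach buys in exchange is the conceptual origin of the formula: the $\omega=\pi/2$ reduction explains \emph{why} the answer is the Bernoulli$(1/2)$ rate function, a fact which your correctly guessed eigenvector leaves unexplained.
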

\begin{proof}
At the point $\omega = \pi/2$ the trajectories of the output process are equivalent to those of a classical process of i.i.d. fair coin tosses; this becomes evident by expressing the Kraus operators as
\begin{eqnarray*}
    V_{0} &= \ket{u}\bra{0},\quad \ket{u} = \tfrac{1}{\sqrt{2}} \left(\ket{0} + i \ket{1} \right),\\
    V_{1} &= \ket{d}\bra{1},\quad \ket{d} = \tfrac{1}{\sqrt{2}} \left(i \ket{0} + \ket{1} \right)
\end{eqnarray*}
which always project onto the states $\ket{u}$ and $\ket{d}$ with equal probability. The large deviations rate function $I$ associated to the sample mean of i.i.d. fair coin tosses is \cite{DenHollander2000}
\begin{equation*}
    I(x) = \log 2 - x \log x - (1-x) \log (1-x),\quad 0 < x < 1.
\end{equation*}
By Thm. \ref{thm:sanov} the spectral $r(T_{t,1})$ is related to the rate function $I$ by a Legendre transformation,
\begin{align*}
    \log r(T_{t,1}) &= \sup_{0 < x < 1} \set{ x t - I(x)}\\
        &= t - \log 2\left(1 + e^{-t}\right)
\end{align*}
and so $r(T_{t,1}) = \frac{1}{2} \left( e^{t} + 1 \right)$. 

For $\omega \neq \pi/2$ it is easy to check that $\lambda_{t} = \frac{1}{2} \left( e^{t} + 1 \right)$ remains an eigenvalue of $T_{t,1}$ with eigenmatrix
\begin{equation*}
    \rho_{\text{ss}} + \frac{1}{2}\left( \lambda_{t} - 1 \right)\mat{1+\cos \omega}{i \sin \omega}{-i \sin \omega}{1 - \cos \omega}
\end{equation*}
where $\rho_{\text{ss}}$ is the stationary state $\tfrac{1}{2}\textbf{1}$. Since $\lambda_{t}$ is independent of $\omega$ and $\lambda_{t} \rightarrow 1$ as $t \rightarrow 0$ we conclude that the moments $\partial_{t}^{n} \lambda_{t} \vert_{t=0}$ are independent of $\omega$.
\end{proof}

This example shows that the LD rate functions obtained from Thm. \ref{thm:sanov} are useful in uncovering dynamical behaviour of a system which is not immediately obtained from the lowest level LD picture.

\section{Discussion}

We have shown that a large deviations principle holds for  the empirical measure associated to an arbitrary number of subsequent outcomes obtained by measuring the output of a primitive quantum Markov chain. This extends the $m=1$ large deviation result for the total counts of outcomes obtained in \cite{Hiai2007}, which is the basis of the thermodynamics theory of quantum trajectories, and the theory of dynamical phase transitions \cite{Garrahan2011,Lesanovsky2013}. We presented an example in which the $m=1$ LD rates are constant with respect to a system parameter, while the $m=2$ theory captures this dependence. This suggests that a continuous-time version of our result would be relevant for a better understanding of dynamical phase transitions. Another direction in which the work can be extended is towards a Donsker-Varadhan LD theory for the empirical process of infinite trajectories.  

Additionally, we showed that the empirical measure satisfies the Central Limit Theorem, extending the result from \cite{Guta2011} which dealt with total counts statistics. The result, and its extensions to more general collective variables of the output are directly relevant for the statistical theory of system identification of open systems \cite{Guta2014}.


\begin{acknowledgments}
\emph{Acknowledgments}.--- The authors would like to thank Juan Garrahan and Igor Lesanovsky for fruitful discussions. 
This work was supported by the EPSRC grant EP/J009776/1. 
\end{acknowledgments}


%
%
\bibliography{bibliography}


\end{document}